\newcommand{\E}{\mathbf{E}}
\newcommand{\KURZ}[2]{#1}
\newcommand{\SmallMath}[1]{\scriptsize #1 \normalsize}
\newcommand{\SAMP}{{\tt SAMP}}
\newcommand{\GAUS}{{\tt GAU}}
\newcommand{\LAP}{{\tt LAP}}
\newcommand{\ADN}{{\tt ADN}}
\newcommand{\ULS}{\UL_{\tt STAT}}
\newcommand{\UL}{{\rm UL}}
\newcommand{\COV}{{\rm Cov}}
\newcommand{\IMG}{{\rm Img}}
\newcommand{\muja}{\mu_{j,\aa}}
\newcommand{\mujb}{\mu_{j,\bb}}
\newcommand{\mujp}{\mu_{j}^{{+}}}
\newcommand{\mujm}{\mu_{j}^{-}}
\renewcommand{\ee}{\varepsilon}
\newcommand{\Fpi}{F^\pi}
\DeclareMathOperator{\supp}{supp}
\begin{document}
\title{Statistical Privacy\thanks{
This research has been conducted within the AnoMed project (https://anomed.de/)
 funded by the BMBF  (German Bundesministerium für Bildung und Forschung)
and the European Union in the NextGenerationEU action.}
}
%
\author{Dennis Breutigam \and R\"udiger Reischuk}
%
\authorrunning{D. Breutigam, R. Reischuk}
%
\institute{University of L{\"u}beck, Ratzeburger Allee 160, 23562 L\"ubeck, Germany\\ 
\email{\{d.breutigam, ruediger.reischuk\}@uni-luebeck.de}\\
\url{www.tcs.uni-luebeck.de}}

\maketitle              
\begin{abstract}
    To analyze the  privacy guarantee of personal data in a data\-base that is
    subject to queries it is necessary to model the prior knowledge of a possible
    attacker. Differential privacy considers a worst-case scenario where he knows
    almost everything, which in many applications is unrealistic and requires a
    large utility loss.
    
    This paper considers a situation called
    \emph{statistical privacy} where an adversary knows the distribution by which
    the database is generated, but no exact data of all (or sufficient many) of
    its entries. We analyze in detail how the entropy of the distribution guarantes privacy for a
    large class of queries called \emph{property queries}.
     Exact formulas are obtained for the privacy parameters.
    We analyze how they depend on the probability that an entry fulfills the property under investigation.
   These formulas turn out to be lengthy, but can be used for tight numerical approximations 
   of the privacy parameters.
   Such estimations are necessary for applying privacy enhancing techniques in practice.
    For this statistical setting we further investigate the effect of
    adding noise or applying subsampling and the privacy utility tradeoff. 
    The dependencies on the parameters are illustrated in detail by a series of plots.
    Finally,  these results are compared to the differential privacy model.
\keywords{differential privacy \and background knowledge \and Gaussian noise \and  
Laplace noise \and subsampling 
\and utility tradeoff.
}
\end{abstract}
\section{Introduction}

In many fields like medicine or social sciences research is not possible without access to personal data.
One solution is to anonymize databases by techniques like microaggregation to achieve
$k$-anony\-mity or variants of it and then make such a modified database publicly available.

Alternatively one could keep the database secret, but allow certain queries about it.
Depending on the type of queries and  prior knowledge
how much information about individual entries can be deduced from
the answers? 

The extreme case where the adversary knows almost everything about the database is
modeled by the \emph{differential privacy} setting \cite{DR14,DMN16}.
This is a pessimistic worst case scenario  rarely occurring in practice where
queries should not be answered precisely because then
the adversary could easily determine the properties of the critical entry.
Instead one has to distort the answer in some way.

What are suitable techniques for the distortion that on the one hand leave much uncertainty 
for an adversary and thus keeps the privacy of individuals, 
but on the other hand still enable researchers
to deduce appropriate results -- the tradeoff privacy versus utility \cite{KL12}?
If in this scenario good privacy can be guaranteed without significant loss of utility
the problem would have been solved. 
However, in many cases the utility loss seems to be too severe.
Hence, more realistic scenarios than this worst case might be better suited to solve
the problem in practice \KURZ{\cite{DP22}}{\cite{DP22}}.
For other subtleties of differential privacy see \cite{M18}.
\KURZ{
Alternatives to this strong privacy notion have been proposed since then.
\cite{DP22} gives a lists of more than 200 such privacy definitions and states their
most important properties and relations as far as known. }{}

The goal of this paper is to mathematically analyze a privacy notion that we consider
most suitable for many practical applications and compare its privacy parameters 
with the one of differential privacy.
Our alternative privacy setting  assumes that an adversary 
does not know all details about the entries. 
His prior information (also called background knowledge) is limited and called
\emph{passive partial knowledge differential privacy} in \cite{DMK20}.
The adversary knows the distribution by which the database has been generated and possibly some
additional information. 
For a rigorous mathematical analysis we will concentrate on the underlying
distribution and do not consider any other unspecified information. 
In  \cite{BBG11} this situation is called \emph{noiseless privacy}.

But we want to consider the option to increase privacy by adding noise or other techniques
and refer to this setting as \emph{statistical privacy}.\footnote{
Note that some authors have used this term for any kind of privacy enhancing modelling.}
It is similar to the notion  \emph{(inference-based) distributional differential
privacy} of \cite{BGKS13}, but instead of a simulator here we compare
conditional distributions directly. {We refer to this setting as
\emph{statistical privacy}. We consider the case that an adversary knows the
distribution of each entry exactly. These distributions can differ.}

Some entries may even be fixed by restricting the support of their distribution to a single value.
Thus, even  background knowledge  that fully specifies certain other entries can be handled.
Differential privacy is the extreme case that all entries are fixed except the critical one.
Then for an adversary there is no uncertainty caused by the entropy of the database distribution.

A suitable setting should be more realistic for many applications.
If the entropy of a given distribution is not sufficient for specific privacy requirements one could
add a privacy mechanism to get better bounds. 
\KURZ{Now t}{T}he hope is that\KURZ{ already}{} for distributions with some entropy good privacy 
properties can be achieved with less distortion by noise\KURZ{ or other techniques}{}.
The goal of this paper to analyze precisely how much privacy amplification can be achieved and how
much utility is lost  for this. 
We consider a generic class of database queries called \emph{counting} or \emph{property queries}.
The databases may have any number of attributes with arbitrary dependencies among them to characterize its entries.
A property query may select  arbitrary combinations of attributes and ask for the number
(percentage) of entries that fulfill this condition -- in the following called 
\emph{positive entries} with respect to this query.

One of our main results is a precise characterization of the privacy loss with respect to the
probability of positive entries. Thus we can give privacy guarantees, for example, for medical data 
when researchers want to investigate rare diseases.

How much the addition of noise generated by classical distributions like Laplace or Gaussian 
can increase the privacy guarantee  
has been investigated in  a series  of papers (for example see \cite{DMN16}).
Subsampling is another technique that has been considered \cite{BBG20,IC21}. 
This paper investigates these mechanisms in the statistical setting.
It turns out that the analysis gets significantly more complicated than in a worst case scenario.
We derive mathematical formulas for the privacy parameters.
In case that they cannot be given in a simple closed form the results of numerical approximations
are presented to compare the different options.
Among others it is shown that the additional entropy in the statistical setting significantly improves
the privacy. 
Furthermore, subsampling compares favorably to artificial noise.
Such explicit bounds have been missing for most privacy notions,
but are essential for application in practice.
In addition the utility loss caused by mechanisms has hardly been investigated.
We evaluate this tradeoff and compare the privacy enhancement of mechanisms
when they generate the same utility loss.
\\

This paper is organized as follows.
The next section introduces the formal setup for querying databases.
In Section~\ref{SectionPriv} we define privacy notions, in particular
introduce the statistical privacy setting.
Then mechanism to amplify privacy are analyzed and we try to obtain
closed formulas for the corresponding parameters and the utility loss.
Since this is not always possible Section~\ref{SectionCompar} and \ref{SectionSubsam} present
the results of numerical approximations of these parameters.
This way we provide a comparison between differential and statistical privacy and
between noise mechanisms and subsampling to better understand the implications in practice.
The paper closes with an outlook for further research on these issues.
\KURZ{}{Due to space limitations most proofs had to be dropped.}

\section{Databases and Queries}

The following setting will be used in this paper to model privacy issues\KURZ{
with respect to querying databases}{}.

\begin{definition}[Databases and Queries]
An \bfinw{entry $I$} of a database is specified by $d$ attributes. 
Formally, it is a vector of the space \wfinw{\begin{math}W \gla W^{(1)} \times \ldots \times W^{(d)}\end{math}},
where \begin{math}
   W^{(i)}
\end{math} are the possible values of the $i$-th attribute. 
Then an entry is given by \wfinw{\begin{math}I = (w_1,\ldots,w_d)\end{math}}
with \begin{math}w_i \in W^{(i)}\end{math}.
There can be any dependencies among the attributes.

A database  \wfinw{$D$} of size $n$  is a sequence of entries \begin{math}
   I_1,\ldots,I_n
\end{math}.
There may be some prior information how a specific database  looks like 
given by a  distribution, resp.~density function \wfinw{\begin{math} \mu=(\mu_1,\ldots,\mu_n)\end{math}} on $W^n$,
where $\mu_j$ is the marginal distribution of the $j$-th entry.
To reduce notation, in this paper we use the same symbols for a distribution and
its density function if it is clear from the context.
By $\muja$ we denote the distribution where \begin{math}
   \mu_j \equiv \aa
\end{math} is 
fixed to a constant value $\aa$ of the support  of $\mu_j$ denoted by \begin{math}
   \supp(\mu_j)
\end{math}.

Let $\caF$ be a set of queries that may be asked for a given database.
Formally, this is described by measurable functions \wfinw{\begin{math}
   F: \; W^n \mapsto A
\end{math}}, 
where $A$ denotes an appropriate  set of possible answers.
\KURZ{\wkast}{}
\end{definition}

A precise  analysis of the information gain of an adversary when querying a database 
looks hopeless if the distribution $\mu$ can be arbitrarily complex.
There may be some dependency between entries of a database, for example if it contains twins
that share many personal attributes.
In many cases it is still realistic to assume that the entries are independent
which will be assumed in the following.
Attributes, however, may have arbitrary dependencies between each other.

If queries are allowed that are specific to certain entries of a database like
``the age of the second entry''\KURZ{ or ``the number of female entries $I_j$
where the position $j$ is  divisible by $3$''}{} it is impossible to guarantee
individual privacy. Thus, we restrict the adversary to queries $F$ where the
order of the entries is irrelevant -- that means \emph{symmetric} functions. In
particular, the weight by which an entry influences the result of $F$ is
identical for all entries. Otherwise\KURZ{ already with}{} simple linear
functions like \KURZ{summing up the}{weighted sums}
values \KURZ{of an attribute over all entries}{}\KURZ{, privacy is completely
lost.}{destroy individual privacy.}
\KURZ{An adversary even not knowing anything about the database  may ask two queries, 
where in the second one the weight of the critical entry is slightly changed,
and from this he can determine the value of that attribute  exactly.}{}
Thus, symmetry is a natural restriction and prevents that precise information about single entries 
can be obtained by such  simple queries.\KURZ{\footnote{
This condition can be relaxed by allowing an entry $I_j$ to be be transformed 
by a fixed function $g_j$ before being used as input.
Thus, we can handle queries $F$ of the form \begin{math}
   F(I_1,\ldots,I_n) \gla h_F( \sum_j
   \; g_j(I_j))
\end{math}, where
the $g_j$ are arbitrary, but fixed for a database, and $h_F$ can be arbitrary, but
has to be symmetric.
}}{}

For symmetric functions the sequence of arguments $I_j$  can be replaced by a multiset
or alternatively by a histogram for all possible values of $W$.
To extract information\KURZ{ from a database}{} one can consider specific \emph{properties} $U$
and ask for the number or percentage of entries that have property $U$\KURZ{ (a
\emph{counting query})}{}.
A property $U$ can be any subset of $W$.
Define \begin{math}
   \mu_j(U) := \sum_{w \in U} \mu_j(w)
\end{math} as the  prior probability that 
the $j$-entry has property $U$.
A property is nontrivial if there is a nonempty set of entries such that
\begin{math}
   \mu_j(U)\not=0
\end{math} or $1$.
In the following we consider only nontrivial property queries. 
For trivial properties one obviously does not have to query the database.

\begin{definition}[Property Query]
A \wfinw{property query} $F$ is described by a subset \wfinw{$W_F \sse W$}.
The correct answer  for $F$ given a database $   D = I_1,\ldots,I_n$  is the value
\(   y_F \gla y_F(D)  \dea | \{j \mena I_j \in W_F\}| / n \ .  \)

 If $D$ is not fixed, but distributed according to a distribution $\mu$ then
 the correct answer is a random variable \wfinw{\begin{math}Y_F^\mu \end{math}}
 with expectation \begin{math}\Erw{Y_F^\mu} =  \sum_j \mu_j(W_F) / n \end{math}.
 In case that all prior probabilities \begin{math}\mu_j(W_F)\end{math} are identical to some value $\pi$
such a property query $F$ will be denoted by \wfinw{$\Fpi$}.

 To protect entries and their attributes one could distort the correct answer by applying
a function $M$ called \bfinw{mechanism} to generate  a distribution around  $y_F(D)$.
 Let \wfinw{\begin{math}Y_{F,M}^\mu \end{math}} denote this random variable.
 \wkast
\end{definition}

\KURZ{Note that $Y_{F,M}^\mu$ has two sources of randomness:
the distribution $\mu$ of $D$ and the deviation generated by $M$.}{}

Property queries are a generic type of queries that can
simulate\KURZ{ almost every}{ most} sensible 
exploration\KURZ{}{s} of databases by symmetric queries.
\KURZ{For example, if one is interested in the average or median value of a certain attribute
listed in a database, by a sequence of property queries with thresholds for this attribute
(the percentage of entries that have at least a certain value for this attribute)
a histogram can be obtained from which the answer to these questions can be approximated
arbitrarily.
This extends to combinations of attributes, too.}{}
With respect to privacy guarantees, properties  with very low or high probability --
\KURZ{for example being very rich or}{} having a rare disease -- could be critical.
This dependency will be part of the following analysis.

A single query $F$  might provide only little information about the entries.
To get more information a database might be queried several times
called \emph{composition} of queries or \emph{interactive protocol} \cite{BBG11,ACG16,M17,S22}.
An important question is the rate by which privacy is reduced in this case.
This paper is restricted to the case of single queries.
Composition in the statistical setting is even more complex to analyze mathematically
and will be subject of further research.

\section{Privacy}
\label{SectionPriv}

To measure the privacy loss of entries contained in a database $D$ after it has been queried
we\KURZ{ want to}{} estimate the information an adversary can deduce from the answers.
This obviously depends on the levels of prior information of the adversary.
\KURZ{He may have no information at all, thus can only choose the  queries to be answered by the curator of the database.}{}

 If an adversary has some background information and already knows certain entries of $D$
let  $D^{n'}$be the database consisting of the remaining $n'$ unknown entries.
For a query $F$ he can consider the restriction of $F$
to $D^{n'}$ and then correct the answer according to the values of the known entries.
Knowing some entries reduces the uncertainty of the adversary.
The privacy analysis can proceed in the same way by taking the marginal distribution
of the database conditioned on the knowledge of the adversary.
In the simplest case we consider only the entropy of the entries that are completely unknown.

The strongest nontrivial adversary one can think of knows all entries  exactly except one $I_j$
and tries to deduce as much information about its attributes as possible.
An easier task is a decision problem, namely to decide given a fixed vector $\aa \in W$ whether the critical entry
has these properties. In other words, whether a known individual with attributes $\aa$ is
contained in the database.
If this is still hard for an adversary the privacy of the critical entry is saved when it equals $\aa$.
This privacy setting been defined as follows.

\begin{definition}[Differential Privacy] \label{DPi}
A query $F$ together with a mechanism $M$ is called
 \bfinw{$(\ee, \delta)$-differential private}   for a set $\caD \sse W^n$ of possible databases 
 if for all \begin{math} (I_1,\ldots,I_n) \end{math},  all \begin{math} \alpha, \beta \in W \end{math}  and \begin{math}j \in [1\ldots n]\end{math}
 such that \begin{math}
   D=(I_1,\ldots,I_j=\aa,\ldots,I_n)
 \end{math} and \begin{math}
   D'=(I_1,\ldots,I_j=\bb,\ldots,I_n)
 \end{math} belong to $\caD$ 
 and all $S \sse   A$, holds
 \begin{displaymath}
   \Prob{Y_{F,M} \in S \mid D}{}   \kla   e^{\ee} \ \Prob{Y_{F,M} \in S \mid D'}  {}+ \delta   \ , 
 \end{displaymath}
in other words the distribution for $D$ and $D'$ are \bfinw{$(\ee,   \delta)$-indistin\-guishable}.\KURZ{\footnote{
There is an alternative definition where the two databases differ by the inclusion of the critical entry.
The privacy results  are similar -- only the security parameters differ slightly.
Comparing the distributions of databases of different sizes complicates the mathematical analysis,
therefore we prefer the first alternative.}
}{}
\end{definition}

If one can guarantee differential privacy with small parameters $\ee,\dd$ against 
such a strong adversary the entries are protected in an optimal way.
However, in general this requires a large distortion by the mechanism $M$ which decreases
the utility of the answer significantly.
In reality, an adversary in most cases has less prior information  about the database.

\begin{definition}[Statistical Privacy]\label{NPi}
A query $F$ together with a mechanism $M$ is called \bfinw{$(\ee,
 \delta)$-statistical private}  for a distribution $\mu$ (or a set of
 distributions $\mu$) if for all  \begin{math}
   j \in [1\ldots n]
 \end{math}, \begin{math}
   S \sse   A
 \end{math}, 
  \begin{math}
   \alpha, \beta \in \supp(\mu_j)
  \end{math}
  holds
\begin{displaymath}
   \Prob{Y_{F,M} \in S }{ \mu_{j ,\aa}}   \kla e^{\ee} \ \Prob{Y_{F,M} \in S}{\mu_{j,\bb}}   + \delta   \ , 
\end{displaymath}
this means the conditional distributions by fixing the $j$-entry to $\aa$, resp.~$\bb$ 
are  $(\ee,   \delta)$-indistinguishable.      
If there is no mechanism involved, that is we consider $Y_F$  distributed
according to   $\mu$, the query $F$ is called  \bfinw{$(\ee,   \delta)$-pure statistical private}.
\end{definition}

Note that our definition of pure statistical privacy is  different from
noiseless privacy\KURZ{ given}{} in \cite{BBG11}
to avoid problems with post-processing as discussed in \cite{M18}.
This issue has\KURZ{ already}{} been observed in \cite{BGKS13} and fixed.
\KURZ{Let us restate two fundamental properties of this privacy notion
called \emph{privacy axioms} in \cite{KL12}.
Using techniques of \cite{DMK20} it can easily been shown.
\begin{lemma}
For every distribution $\mu$ holds:
   \begin{itemize}
      \item \emph{Post-processing:} 
      If a query $F$ and a mechanism $M$ are $(\ee, \delta)$-statistical private
      and $g$ is a random function from the output space $Z$ of queries to a set $Z'$
      then $g \circ (F,M)$ is  $(\ee, \delta)$-statistical private, too.
      \item \emph{Convexity:} 
      For any two queries $F_1$ and $F_2$ with
      mechanisms $M_1$ and $M_2$ that are $(\ee, \delta)$-statistical private
       the query mechanism $(F,M)$ defined by $(F,M) =
      (F_1, M_1)$ with probability $p$ and $(F,M) = (F_2, M_2)$ with probability
      $1-p$ is $ (\ee, \delta)$-statistical private.
   \end{itemize}
\end{lemma}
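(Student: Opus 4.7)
The plan is to derive both axioms directly from Definition~\ref{NPi} by fixing an index $j$ together with a pair $\aa,\bb\in\supp(\mu_j)$, and then using the law of total probability to absorb the additional randomness that $g$ introduces (for post-processing) or that the branch-selecting coin introduces (for convexity). In both cases the hypothesis is applied to a suitably chosen measurable event in the original output space, and the bound is transported to the enlarged output space by averaging.

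For post-processing, I would represent the random function $g\colon Z\to Z'$ by an auxiliary seed $R$, drawn independently of $(F,M)$, such that conditional on $R=r$ the map $g_r$ is deterministic. For any measurable target $S'\sse Z'$, set $S_r:=g_r^{-1}(S')\sse Z$. Applying the $(\ee,\delta)$-privacy hypothesis of $(F,M)$ to the fixed measurable set $S_r$ yields
\begin{displaymath}
\Prob{Y_{F,M}\in S_r}{\mu_{j,\aa}} \;\le\; e^{\ee}\,\Prob{Y_{F,M}\in S_r}{\mu_{j,\bb}} + \delta
\end{displaymath}
for every value $r$. Averaging both sides over the distribution of $R$ — which is legitimate because $R$ is independent of $(F,M)$ and the integrands are uniformly bounded — turns the left-hand side into $\Prob{g(Y_{F,M})\in S'}{\mu_{j,\aa}}$ and the right-hand side into $e^{\ee}\,\Prob{g(Y_{F,M})\in S'}{\mu_{j,\bb}} + \delta$, which is exactly the statistical-privacy bound for $g\circ(F,M)$.

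For convexity, let $Y_i$ denote the output of $(F_i,M_i)$ and $Y$ the output of the mixture $(F,M)$. Conditioning on which of the two branches is selected gives
\begin{displaymath}
\Prob{Y\in S}{\mu_{j,\aa}} \;=\; p\,\Prob{Y_1\in S}{\mu_{j,\aa}} + (1-p)\,\Prob{Y_2\in S}{\mu_{j,\aa}} \ .
\end{displaymath}
Applying the $(\ee,\delta)$-bound termwise and noting that the factor $e^{\ee}$ pulls out of the convex combination while the additive $\delta$ is preserved because the weights sum to one, we obtain $\Prob{Y\in S}{\mu_{j,\aa}} \le e^{\ee}\,\Prob{Y\in S}{\mu_{j,\bb}} + \delta$ as required.

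I do not expect any real obstacle: apart from routine algebra, the sole point requiring care is measure-theoretic hygiene in the post-processing step, namely the measurability of the preimages $S_r$ and of the map $r\mapsto\Prob{Y_{F,M}\in S_r}{\mu_{j,\aa}}$. Both are immediate once $g$ is formalised as a Markov kernel, which is the standard reading of ``random function'' implicit in the statement, so the argument goes through verbatim for arbitrary (possibly continuous) output spaces.
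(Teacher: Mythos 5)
Your proposal is correct. The paper itself gives no proof of this lemma---it only remarks that the two axioms follow ``using techniques of \cite{DMK20}''---and your argument is exactly the standard one that reference supplies: for post-processing, realise the random map $g$ via an independent seed, apply the $(\ee,\delta)$-bound to each deterministic preimage $g_r^{-1}(S')$, and average; for convexity, condition on the branch and use that the mixture weights sum to one so the additive $\delta$ survives the convex combination. Both steps are sound, and your remark that independence of the post-processing randomness from the mechanism is the one point needing care is well taken.
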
}{Using techniques of \cite{DMK20} it can been shown that
statistical-privacy fulfills post-processing and convexity \cite{KL12}.}
For property queries the privacy notion can be simplified.
For a query $F$ let $\mujp$ (resp.~$\mujm$) be 
the conditional distribution of $\mu$, where
$I_j$ is an entry that belongs  (resp.~does not) to $W_F$.
Now we measure the information gain obtained by the adversary, the privacy loss,
by the difference of $y_{F,M}$ for $\mu_{j,+}$ and $\mu_{j,-}$, and require
\begin{equation}
      \Prob{Y_{F,M} \in S } {\mu_{j,+}} \le
      e^{\ee} \ \Prob{Y_{F,M} \in S } { \mu_{j,-}}  + \delta   \hbox{ \ and} 
      \label{Ce}
\end{equation}
\begin{equation}
     \Prob{Y_{F,M} \in S } {\mu_{j,-}} \le
      e^{\ee} \ \Prob{Y_{F,M} \in S } {\mu_{j,+}}  + \delta   \ . \label{Cz}
\end{equation}

\begin{definition} \label{NPii}
Let  \begin{math}Y_{F,M}: \;  \Omega \ria Z \end{math} be a  random variable for
a property query $F$ and mechanism $M$. $\nu_1,\nu_2$  are distributions on
$\Omega$ and \begin{math} {\nu_i}^{{F,M}}\end{math}  the  distributions,
resp.~density functions of $Y_{F,M}$  with respect  to $\nu_i$. Then for $z \in Z$
the \bfinw{privacy loss random variable} (PLRV)  \cite{SMM19} of \KURZ{$F$ and $M$}{$F,M$}
with resp.~to $\nu_1,\nu_2$ is defined as
\begin{displaymath}
   \caL^{F,M}_{\nu_1,\nu_2}(z) \dea \ln \ \frac{\nu_1^{{F,M}}(z)}{\nu_2^{{F,M}}(z)} \ ,
\end{displaymath}
where 
\begin{math}
   \ln \frac{0}{0} := 0
\end{math} and 
\begin{math}
   \ln \frac{>0}{0} := \infty
\end{math}.
Given $F,M,\nu_1,\nu_2$ we define the \bfinw{privacy curve} by the
following function for $\ee \ge 0$ 
\begin{equation}
   \dd_{F,M,\nu_1,\nu_2}(\ee) \dea \int_Z \mu^{F,M}_{\nu_1}(z) \cdot
   \max ( \; 0, \; 1-\exp(\ee-\caL^{F,M}_{\nu_1,\nu_2}(z))) \mathbf{d} z \ .  \label{PrivCurve}
\end{equation}
If  
\begin{math}
\dd \ge  \underset{j}{\max}   \underset{\aa,\bb \; \in\; \supp(\mu_j)}{\sup} \dd_{F,M,\muja,\mujb}(\ee)
\end{math} 
we call $(\ee,\dd)$ an  \bfinw{achievable privacy pair} for $(F,M)$ and $\mu$.
\end{definition}
In \cite{BBG20}  the privacy curve is called privacy profile.
For property queries the condition on $\dd$ can be simplified to 
\begin{equation}
\dd \ge  \max \{  \dd_{F,M,\mu_+,\mu_-}(\ee), \; \dd_{F,M,\mu_-,\mu_+}(\ee) \} \ .  \label{PrivCurveProp}
\end{equation}
\KURZ{}{Similiar to differential privacy one can show}
\begin{lemma} \label{Le}
For every distribution $\mu$ holds:
If $(\ee,\dd)$ is an achievable pair for $(F,M)$ then $(F,M)$ is $(\ee,\dd)$ statistical private.
\end{lemma}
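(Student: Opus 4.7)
The plan is to unpack the definition of the privacy curve and show that the integral bound translates directly into the required indistinguishability for every measurable event. Fix $j$, elements $\aa,\bb \in \supp(\mu_j)$ and a measurable $S \sse A$; write $\nu_1 \dea \muja$, $\nu_2 \dea \mujb$, and let $p_i(z) \dea \nu_i^{F,M}(z)$ denote the output densities. I must show
\begin{displaymath}
 \int_S p_1(z)\, dz \;-\; e^{\ee} \int_S p_2(z)\, dz \;\le\; \dd.
\end{displaymath}

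The first step is the pointwise identity $p_1(z) - e^{\ee} p_2(z) = p_1(z)\bigl(1 - e^{\ee - \caL^{F,M}_{\nu_1,\nu_2}(z)}\bigr)$, valid wherever $p_1(z) > 0$ (using the paper's convention $\ln(>0/0) = \infty$, so $e^{\ee-\infty} = 0$ handles the case $p_2(z)=0$ correctly). On the set $\{z : p_1(z) = 0\}$ the left side is $-e^\ee p_2(z) \le 0$ and can only decrease the integral, so it can be dropped when establishing the upper bound. Next, since $p_1(z)\bigl(1 - e^{\ee - \caL(z)}\bigr) \le p_1(z)\,\max\bigl(0, 1 - e^{\ee - \caL(z)}\bigr)$ pointwise, and the right-hand side is nonnegative, extending the integration domain from $S$ to all of $Z$ only enlarges the integral. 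This chain of inequalities yields
\begin{displaymath}
 \int_S \!\bigl(p_1(z) - e^{\ee} p_2(z)\bigr) dz \;\le\; \int_Z p_1(z)\,\max\bigl(0,\, 1 - e^{\ee - \caL^{F,M}_{\nu_1,\nu_2}(z)}\bigr) dz \;=\; \dd_{F,M,\nu_1,\nu_2}(\ee),
\end{displaymath}
which is exactly the definition given in equation (\ref{PrivCurve}).

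Finally, the achievability assumption $\dd \ge \sup_{j,\aa,\bb} \dd_{F,M,\muja,\mujb}(\ee)$ provides $\dd_{F,M,\nu_1,\nu_2}(\ee) \le \dd$, so rearranging gives the defining inequality of Definition~\ref{NPi}. Since $j$, $\aa$, $\bb$, and $S$ were arbitrary, $(F,M)$ is $(\ee,\dd)$-statistical private.

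There is no real obstacle here; this is a short bookkeeping argument built from the observation that the worst-case event is $S^\star = \{z : \caL(z) > \ee\}$ and that the privacy curve is precisely $\sup_S (\Prob_{\nu_1}[S] - e^{\ee}\Prob_{\nu_2}[S])$. The only slightly delicate point is handling the extended-real values of $\caL$ (the $0/0$ and $>0/0$ conventions), which must be done carefully so that the pointwise identity is valid everywhere and the resulting integrand is measurable and nonnegative.
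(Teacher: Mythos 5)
Your proof is correct and follows essentially the same route as the paper's: both reduce the statistical-privacy inequality for an arbitrary event $S$ to the integral $\int_Z p_1(z)\max(0,\,1-e^{\ee-\caL(z)})\,\mathbf{d}z$, the paper by identifying the maximizing set $S_{max}=\{z: e^\ee \le p_1(z)/p_2(z)\}$ and you by bounding a fixed $S$ from above after dropping negative contributions. The additional care you take with the $\{p_1=0\}$ set and the extended-real conventions is a welcome tightening of a detail the paper glosses over.
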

\KURZ{
   \begin{proof}
       The inequality in (\ref{Ce})  can be rewritten as
       \SmallMath{
       \begin{displaymath}
         \begin{aligned}
            \dd &\ge \max_{S} \left\{\Prob{Y_{F,M} \in S}{\muja} - e^\ee \Prob{Y_{F,M} \in S}{\mujb}  \right\} \\
            &= \max_{S} \left\{\int_S  \muja(z) - e^\ee \  \mujb(z)  \  \mathbf{d}z \right\} 
            = \max_{S} \left\{\int_S  \muja(z)  \ \lk 1- e^\ee \  \frac{ \mujb(z)}{ \muja (z)} \rk  \  \mathbf{d}z \right\}  \ .
         \end{aligned}
       \end{displaymath}}
      The maximizing set $S \sse Z$ it given by 
      \( \dis
         S_{max} := \left\{ z \mid e^\ee  \kla \frac{\muja (z)}{\mujb (z)} \right\}  \ .
   \)
      Therefore the condition is equivalent to
      \SmallMath{
      \begin{displaymath}
            \delta  \ge    \int_{Z} \max \lk 0, \ \muja (z) \ \lk 1- e^\ee \  \frac{ \mujb (z)}{ \muja (z)} \rk  \rk \   \mathbf{d}z = \int_{Z} \max \lk 0, \ \muja (z) \ \lk 1- \frac{e^\ee}{ e^{\ln  \frac{ \muja  (z)}{ \mujb (z)} } } \rk  \rk \  \mathbf{d}z  \ . 
      \end{displaymath}}
   \end{proof}
}{}
\KURZ{Combining Definition~\ref{NPi} and Lemma~\ref{Le} we get}{}
\begin{lemma}\label{lemma:statPrivMaxDelta}
Given $\mu$, a property query $F$ and a mechanism $M$ are 
$(\ee,\dd)$-statistical private iff  for all 
\begin{math}
   j \in [1 \ldots n]
\end{math} and all 
\begin{math}
   \aa,\bb \in \supp(\mu_j)
\end{math} holds
\begin{math}
   \dd \gra  \dd_{F,M,\muja,\mujb}(\ee)\ .
\end{math}
\end{lemma}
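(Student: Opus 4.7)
The plan is to show that, for each fixed triple $(j,\aa,\bb)$, the defining inequality of Definition~\ref{NPi} (with the two specific distributions $\muja$ and $\mujb$ playing the roles of the two conditional distributions) is equivalent to the single scalar condition $\dd \ge \dd_{F,M,\muja,\mujb}(\ee)$. Once this is established, the lemma follows by exchanging the ``for all $j,\aa,\bb$'' quantifiers on both sides. Note that one direction (achievability implies statistical privacy) is already Lemma~\ref{Le}; the content here is to prove the converse, i.e.~that the privacy curve in (\ref{PrivCurve}) is exactly the sharp $\dd$ for the pair $(\muja,\mujb)$.

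First, for fixed $j,\aa,\bb$ I rewrite the indistinguishability condition
\[
\Prob{Y_{F,M}\in S}{\muja}\;\le\; e^{\ee}\,\Prob{Y_{F,M}\in S}{\mujb}+\dd \qquad\text{for all measurable } S\subseteq A
\]
equivalently as
\[
\dd \;\ge\; \sup_{S}\;\bigl\{\Prob{Y_{F,M}\in S}{\muja}-e^{\ee}\,\Prob{Y_{F,M}\in S}{\mujb}\bigr\}.
\]
Writing the probabilities as integrals of the densities $\muja^{F,M}$ and $\mujb^{F,M}$ on $Z$, the integrand to be maximized pointwise is $\muja^{F,M}(z)-e^{\ee}\mujb^{F,M}(z)$. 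Hence the supremum is attained on the measurable set
\[
S_{\max}\;:=\;\bigl\{z\in Z \;:\; \muja^{F,M}(z) > e^{\ee}\,\mujb^{F,M}(z)\bigr\}\;=\;\bigl\{z\in Z \;:\; \caL^{F,M}_{\muja,\mujb}(z) > \ee\bigr\},
\]
where the conventions $\ln(>0)/0=\infty$ and $\ln 0/0=0$ from Definition~\ref{NPii} correctly place points with $\mujb^{F,M}(z)=0<\muja^{F,M}(z)$ inside $S_{\max}$ and discard points where both densities vanish.

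Plugging $S_{\max}$ back in, I factor out $\muja^{F,M}(z)$ and use $e^{\ee}\mujb^{F,M}(z)/\muja^{F,M}(z)=\exp(\ee-\caL^{F,M}_{\muja,\mujb}(z))$ on $\{\muja^{F,M}>0\}$, obtaining
\[
\sup_{S}\cdots \;=\; \int_{Z}\muja^{F,M}(z)\,\max\!\bigl(0,\;1-\exp(\ee-\caL^{F,M}_{\muja,\mujb}(z))\bigr)\,\mathbf{d}z \;=\; \dd_{F,M,\muja,\mujb}(\ee),
\]
which is exactly the privacy curve in (\ref{PrivCurve}). Thus $(F,M)$ satisfies the $\muja$-vs.-$\mujb$ indistinguishability inequality at level $(\ee,\dd)$ if and only if $\dd \ge \dd_{F,M,\muja,\mujb}(\ee)$.

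Finally, since Definition~\ref{NPi} demands this inequality for every $j$ and every ordered pair $(\aa,\bb)\in\supp(\mu_j)^{2}$, statistical privacy is equivalent to $\dd \ge \dd_{F,M,\muja,\mujb}(\ee)$ holding for every such triple, proving the claim. The only real subtlety is the measure-theoretic handling of null sets and of the degenerate case $\mujb^{F,M}(z)=0$; this is the step where one must invoke the conventions of Definition~\ref{NPii} to justify writing the supremum as a single integral and to guarantee that $S_{\max}$ is indeed measurable and maximizing. The remaining manipulations are routine rewriting of the integrand.
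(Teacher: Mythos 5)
Your proof is correct and follows essentially the same route as the paper: the paper likewise rewrites the indistinguishability condition as a supremum over sets $S$, identifies the maximizing set $S_{\max}=\{z : e^{\ee} \le \muja(z)/\mujb(z)\}$, and recasts the resulting integral as the privacy curve (\ref{PrivCurve}), then quantifies over all $j,\aa,\bb$. Your added care about the $\ln\frac{>0}{0}$ convention and measurability is a minor refinement, not a different argument.
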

The next section derives explicit formulas for $\dd_{F,M,\muja,\mujb}(\ee)$ for property queries $F$
and the mechanisms that have mainly been considered so far.
\KURZ{The pure case without any mechanism can technically be interpreted as a special case of subsampling.
Thus, we start with subsampling and afterwards consider pure statistical privacy.}{}

\section{Mechanisms}

Several mechanisms to guarantee privacy have been analyzed.
One technique is adding external noise to the correct answer of a query.
The amount of noise depends on the diversity of the entries in the database.
If this is large, to achieve differential privacy
the distortion of the answers has to be quite large, too.
On the contrary, in the noiseless privacy setting exact results are returned --
the entropy of the database distribution is used to confuse an adversary.

This section considers property queries and
estimates precisely how much the entropy generated by the distribution 
guarantees statistical privacy and how much this 
can be enlarged by mechanisms based on noise and subsampling.
To keep the mathematical formulas manageable we restrict the
analysis to the case that the probability $\pi$ for fulfilling the property
is identical for all entries.

\subsection{Subsampling}\label{mechanismen:subsampling}

Subsampling means that the answer to a query is derived from a random sample of size $m<<n$ of the entries.
The curator does not compute the exact answer to a query, instead applies
a corresponding function to the subsample drawn.
Let us denote this privacy mechanism  by \SAMP \ --
among others it has been considered in~\cite{WBK19,IC21}.
Some adjustment might be necessary, for example for counting queries
returning the total number since a set of smaller size is considered.
\KURZ{
   For property queries this is not necessary since the ratio is computed.
From an exact answer, however, the size of the subsample might be deduced
since this is a multiple of $1/m$. 
If no additional noise is added this can be prevented if $m$ is chosen as a divisor of $n$.
Our analysis below does not require that $m$ is unknown to an adversary.
Here we consider only sampling  without replacement.
Similar results hold with replacement if $m$ is significantly smaller than $n$
($<\sqrt{n}$).}{}

Given a database $D$ of size $n$ and a property query $F$ we have defined
$y_F = y_F(D)$, resp. $\pi_F$ as the portion of positive elements in $D$.
Let us define  \
\begin{math}   \xi := \pi_F / (1-\pi_F)   \end{math}
as the relation between  positive  and negative elements for a given database $D$.
Furthermore, let $a_{F,m}$ denote the random variable that gives the portion of positive
elements in a random sample $D^m$, which can take values $j/m$ with 
\begin{math}   j \in [0 \ldots m] \end{math}
and is hypergeometric distributed. 
\KURZ{This means
\begin{displaymath}
   \Pro{a_{F,m} = j/m} \gla  \frac{\binom{y_F\; n}{j} \ \binom{(1-y_F) \; n}{m-j}}{\binom{n}{m}}
\end{displaymath}
with expectation  $ y_F$ and variance \begin{math}
   m^{-1} \; y_F \; (1-y_F) \; \frac{n-m}{n-1} \kla 1/4m
\end{math}.
For \begin{math}
   m \le 0.05 \; n
\end{math} the distribution of $a_{F,m}$ is well approximated by the
   binomial distribution for $m$ independent draws  and  success probability $y_F$. Thus,
\(
   \Pro{a_{F,m} = j/m}  \apa    \binom{m}{j} \ y_F^{\ j} \ (1-y_F)^{m-j} \ . 
\)
The expectation of this binomial distribution is given by $y_F$ and its variance by
\begin{math}   m^{-1} \; y_F \; (1-y_F) \end{math}. 
To simplify the calculation one can use this approximation for a fixed database. \\[0.1ex]

Now consider subsampling for a database $D$ 
when it is known that the critical element is positive (resp.~negative).
Due to symmetry one can restrict the analysis to the case that the first element is the critical one.
Let $y_F$ be the ratio of positive elements in the remaining $n-1$ ones.
Then we distinguish whether the first element is drawn or not --
the first case happening with probability 
\begin{math}   \ll = m/n  \end{math}.
This gives
\SmallMath{
\begin{displaymath}
   \begin{aligned}
      \Pro{a_{F,m} = \frac{j}{m} \mena D_1 \in W^F} &\approx  \ll \; \binom{m-1}{j-1} \ y_{F}^{\ j-1}  \;   (1-y_{F})^{m-j}  + (1-\ll) \; \binom{m}{j} \ y_{F}^{\ j}   \;  (1- y_{F})^{m-j} \\ 
      \Pro{a_{F,m} = \frac{j}{m} \mena D_1 \notin W^F} &\approx  \ll \; \binom{m-1}{j} \ y_{F}^{\ j} \ (1-y_{F})^{m-j-1}  + (1-\ll) \; \binom{m}{j} \ y_{F}^{\ j}   \;   (1-y_{F})^{m-j}   
   \end{aligned}
\end{displaymath}}
The quotient
of the right sides can be simplified to
\begin{eqnarray*}
      &&\frac{\Pro{a_{F,m} = j/m  \mena D_1 \in W^F }}{\Pro{a_{F,m} = j/m  \mena
      D_1 \notin W^F}} \\[1ex]
      &&\approx \ \frac{\ll \ \binom{m-1}{j-1} \ y_{F}^{\ j-1}  \;    (1- y_{F})^{m-j}  + (1-\ll) \; \binom{m}{j} \ y_{F}^{\ j}   \;    (1- y_{F})^{m-j} }
          { \ll \ \binom{m-1}{j} \ y_{F}^{\ j} \ (1-y_{F})^{m-j-1}  + (1-\ll) \; \binom{m}{j} \ y_{F}^{\ j}   \;  (1-y_{F})^{m-j}  }   \\[1ex]
      && \gla \frac { \ll \; j \ \;    (1- y_{F})  + (1-\ll) \;m \; y_{F}  \;    (1- y_{F})}
          { \ll \; (m-j) \; \xi \; (1-y_{F} )  + (1-\ll) \;m \; y_{F}  \;    (1- y_{F}) } = \frac { \ll \; j + (1-\ll) \;m \;    y_{F}  }
          { \ll \; (m-j) \;  \xi + (1-\ll) \;m \;    y_{F} }  \  ,
\end{eqnarray*}
which is monoton increasing in $j$. \\[0.1ex]

Now assume that subsampling is performed for a property query $F$ with a random database $D$ 
drawn from a distribution $\mu$ where each entry is positive with probability $\pi_F$. 
Let \begin{math}   \xi = \pi_F / (1-\pi_F)  \end{math} 
and $\mu_{+}$ and $\mu_{-}$ be the marginal
distribution of $\mu$ when  the critical element is positive, resp.~negative.
For these densities we get the same expression as above:}{If subsampling is
performed for a property query $F$ with a random database $D$ drawn from a
distribution $\mu$ where each entry is positive with probability $\pi_F$. Let
$\mu_{+}$ and $\mu_{-}$ be the marginal distribution of $\mu$ when the critical
element is positive, resp.~negative. For these densities we get the following
expressions:}
\SmallMath{
\begin{displaymath}
    \begin{aligned}
        \Prob{a_{F,m} = j/m  }{\mu_{+}} &\gla \ll \ \binom{m-1}{j-1} \ \pi_F^{\ j-1}  \;   (1-\pi_F)^{m-j}  + (1-\ll) \; \binom{m}{j} \ \pi_F^{\ j}   \;  (1- \pi_F)^{m-j} \\[1ex] 
        \Prob{a_{F,m} = j/m  }{\mu_{-}} &\gla \ll \ \binom{m-1}{j} \ \pi_F^{\ j} \ (1-\pi_F)^{m-j-1}  + (1-\ll) \; \binom{m}{j} \ \pi_F^{\ j}   \;   (1-\pi_F)^{m-j}   
   \end{aligned}
\end{displaymath}}
Define 
\SmallMath{
\(  \dis Q_+(j) \dea    \frac{\Prob{a_{F,m} = j/m  }{\mu_{+}}}{\Prob{a_{F,m} = j/m}{ \mu_{-}}}  \ .  \)
}

\begin{lemma} \label{LemSubsam}
When taking a subsample of size $m$\KURZ{ (without replacement)}{} from a database distribution $\mu$,
where each entry independently is positive with probability $\pi_F$, then
the number of positive elements is binomially distributed with parameters $m$ and $\pi_F$.
Therefore, the expectation of  $Y_{F,\SAMP}$ is $\pi_F$ and its variance 
\begin{math}  \pi_F \; (1-\pi_F) / m   \end{math} .\\
Furthermore, for the ratio of the marginal distribution holds
\begin{displaymath}
  Q_+(j) \gla     \frac { \ll \; j + (1-\ll) \;m \;    \pi_{F}  }  { \ll \; (m-j) \;  \xi + (1-\ll) \;m \;    \pi_{F} } \ . \\
\end{displaymath}
\end{lemma}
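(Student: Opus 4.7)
The plan is to handle the two assertions separately, starting with the distribution of $Y_{F,\SAMP}$ and then computing the ratio $Q_+(j)$ by conditioning on whether the critical entry is included in the sample.

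For the first part, I would argue that because the $n$ entries of $D$ are independent Bernoulli$(\pi_F)$ with respect to the property $W_F$, and because the subsample of size $m$ is drawn uniformly at random (independently of the entry values), the $m$ sampled entries are themselves i.i.d.\ Bernoulli$(\pi_F)$. Consequently the number of positive elements in the subsample is Binomial$(m,\pi_F)$, from which the stated expectation $\pi_F$ and variance $\pi_F(1-\pi_F)/m$ of $Y_{F,\SAMP}=a_{F,m}$ follow immediately.

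For the second part, I would fix (by symmetry) the first entry as the critical one and condition on whether it is drawn, an event of probability $\lambda = m/n$. If it is drawn and is positive, the remaining $m-1$ sampled entries come from the other $n-1$ entries, which are still i.i.d.\ Bernoulli$(\pi_F)$, so $j$ positives in total occur with probability $\binom{m-1}{j-1}\pi_F^{j-1}(1-\pi_F)^{m-j}$. If it is not drawn, all $m$ sampled entries are i.i.d.\ Bernoulli$(\pi_F)$ and contribute $\binom{m}{j}\pi_F^{j}(1-\pi_F)^{m-j}$. Summing these two cases weighted by $\lambda$ and $1-\lambda$ yields the displayed expression for $\Prob{a_{F,m}=j/m}{\mu_+}$, and an analogous case distinction gives the expression for $\Prob{a_{F,m}=j/m}{\mu_-}$.

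The ratio $Q_+(j)$ then follows by routine algebra. I would apply the identities $\binom{m-1}{j-1}=\tfrac{j}{m}\binom{m}{j}$ and $\binom{m-1}{j}=\tfrac{m-j}{m}\binom{m}{j}$ to pull out a common factor of $\binom{m}{j}$ from numerator and denominator, and then cancel $\pi_F^{j-1}(1-\pi_F)^{m-j-1}$ up to a single factor of $(1-\pi_F)/\pi_F=1/\xi$. Multiplying numerator and denominator by $m$ and using $\xi(1-\pi_F)=\pi_F$ to absorb the leftover $1/\xi$ into the denominator turns $(1-\lambda)m(1-\pi_F)/\xi$ into $(1-\lambda)m\pi_F$ and $\lambda(m-j)/\xi$ into $\lambda(m-j)$ divided by $\xi$; rearranging gives exactly the claimed form. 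No step is deeper than bookkeeping, so I do not expect a real obstacle; the only place to be careful is making sure that conditioning on ``critical entry positive/negative'' does not alter the distribution of the \emph{other} entries, which is guaranteed precisely by the independence assumption on $\mu$.
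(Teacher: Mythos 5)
Your proposal is correct and follows essentially the same route as the paper: condition on whether the critical entry is drawn (probability $\lambda=m/n$), use independence of the remaining entries to get the two mixture expressions for $\Prob{a_{F,m}=j/m}{\mu_\pm}$, and simplify the quotient via $\binom{m-1}{j-1}=\tfrac{j}{m}\binom{m}{j}$ and $\binom{m-1}{j}=\tfrac{m-j}{m}\binom{m}{j}$; your direct observation that the $m$ sampled entries are i.i.d.\ Bernoulli$(\pi_F)$ also cleanly justifies the exact binomial claim that the paper reaches via its hypergeometric discussion. The only blemish is the narration of the last cancellation (e.g.\ $(1-\lambda)m(1-\pi_F)/\xi$ is not $(1-\lambda)m\pi_F$); the correct bookkeeping is to factor $\binom{m}{j}\pi_F^{j-1}(1-\pi_F)^{m-j}/m$ out of numerator and denominator, leaving $\lambda j+(1-\lambda)m\pi_F$ over $\lambda(m-j)\xi+(1-\lambda)m\pi_F$ directly.
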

Since we also have to bound the reciprocal fraction  let us define \begin{math}
   Q_-(j) \dea Q(j)^{-1} 
\end{math}.
Interpolating $Q_+(j) $ to arbitrary rational values $j$ this function takes
value $1$ for \begin{math}   j = \pi_F \; m \end{math} , which is the
expectation of \begin{math}   m \; a_{F,m} \end{math}. To determine the  privacy
curve one has to estimate when the quotients $Q_+(j)$ and $Q_-(j)$ exceed
$e^{\ee}$. For  $Q_+(j)$ consider a fraction 
\begin{math}  j = (1+\cc) \; \pi_F m. \end{math} 
above the expectation where
\begin{math}  0  \le \cc \le 1/\pi_F - 1 \end{math}:
\SmallMath{
   \KURZ{\begin{eqnarray*}
      && Q_+((1+\cc) \; \pi_F \; m)
    \gla  \frac { \ll \; j + (1-\ll) \;m \;    \pi_{F}  } { \ll \; (m-j) \;  \xi + (1-\ll) \;m \;    \pi_{F} }   \\[2ex]
     && \gla  \frac { \ll \; (1+\cc) \; \pi_F \; m+ (1-\ll) \;m \;    \pi_{F}  } 
      { \ll \; (m-(1+\cc) \; \pi_F \; m) \;  \xi + (1-\ll) \;m \;    \pi_{F} }
       \gla   \frac { \ll \; (1+\cc) \; \pi_F \; m+ (1-\ll) \;m \;    \pi_{F}  } 
      { \ll \; (1-(1+\cc) \; \pi_F) \; m \;  \frac{\pi_F}{1-\pi_F}  + (1-\ll) \;m \;    \pi_{F} }   \\[2ex]
      &&\gla  \frac { \ll \; (1+\cc) + (1-\ll) } { \ll \; \frac{1-\pi_F - \cc \; \pi_F}{1- \pi_F}+ (1-\ll) }
     \gla \gla  \frac { \ll \; (1+\cc) + (1-\ll) } { \ll \; (1-\frac{\cc \; \pi_F}{1- \pi_F})+ (1-\ll) }   
      \gla  \frac {    1 + \ll \; \cc}{1 - \ll \; \frac{\cc \; \pi_F}{1-\pi_F}} \ .
\end{eqnarray*}}{\begin{equation*}
Q_+((1+\cc) \; \pi_F \; m)  \gla  \frac {    1 + \ll \; \cc}{1 - \ll \; \frac{\cc \; \pi_F}{1-\pi_F}} \ .
\end{equation*}}
}
This gives the equation 
\begin{math}
   e^\ee \gla Q_+((1+\cc) \; \pi_ F \;  m) \gla  \frac { 1 + \ll \; \cc}{1 - \ll \; \frac{\cc \; \pi_F}{1-\pi_F}} 
\end{math}
that  with respect to  $\cc$ has the solution
\(
   \cc_+^\star \gla  \ll^{-1}  \; \frac{e^\ee-1}{ 1 + \; e^\ee \; \frac{\pi_F}{1-\pi_F}}  \ .
\)
Let  
\begin{math}   \cc_+^{\star\star} := \min\{ \cc_+^\star, \; 1/\pi_F - 1 \} \end{math} 
and 
\begin{math}  j_+^\star \dea (1+\cc_+^{\star\star}) \; \pi_F \; m   \end{math}.
Then for all $   j \le  j_+^\star$ holds: \begin{math}  Q_+(j) \le e^\ee \end{math}.
Now the privacy curve
\begin{math}
   \delta_+(\ee) = \delta_{F, M_{\SAMP} , \mu_+, \mu_-}( \ee )
\end{math}
as defined in (\ref{PrivCurve}) and  (\ref{PrivCurveProp}) simplifies to
\SmallMath{
\begin{align*}
      \dd_+(\ee)
      \KURZ{&= \sum_{j = \left\lceil j_+^\star \right\rceil}^m 
      \Prob{Y_{F,\SAMP} = j/m  }{ \mu_+}  \ \left( 1 - \frac{e^\ee}{Q_+(j)} \right) 
      = \sum_{j = \left\lceil j_+^\star \right\rceil}^m  \left(\Prob{Y_{F,\SAMP} = j/m  }{ \mu_+} 
            - e^{\ee} \; \Prob{Y_{F,\SAMP} = j/m  }{ \mu_-}  \right) \\}{}
     &= \sum_{j = \left\lceil j_+^\star \right\rceil}^m  \pi_F^{j-1} (1-\pi_F)^{m-j-1} \binom{m-1}{j} \lk \ll \lk \frac{j (1-\pi_F)}{m-j} - e^\ee \; \pi_F \rk + (1-\ll) (1-e^\ee) \; \frac{m \; \pi_F (1-\pi_F) }{m-j} \rk 
\end{align*}}%
Similar calculations hold for the negative case 
\begin{math}
\delta_-(\ee) \gla \delta_{F, M_{\SAMP} , \mu_-, \mu_+}( \ee )
\end{math} 
with the monoton decreasing function $Q_{-}(j)$.
\KURZ{Now, for $\cc \le 1$ one has to solve 
\begin{displaymath}
   \begin{aligned}
      Q_-((1-\cc) \; \pi_F \; m) 
&=   \frac { \ll \; (m-j) \;  \xi + (1-\ll) \;m \;    \pi_{F} } { \ll \; j + (1-\ll) \;m \;    \pi_{F}  }   \\[2ex]
&=   \frac { \ll \; (m-(1-\cc) \; \pi_F \; m) \;  \xi + (1-\ll) \;m \;    \pi_{F} } 
      { \ll \; (1-\cc) \; \pi_F \; m+ (1-\ll) \;m \;    \pi_{F}  } 
=   \frac{1 + \ll \; \frac{\cc \; \pi_F}{1-\pi_F}} {    1 - \ll \; \cc}  \ .
   \end{aligned}
\end{displaymath}
The solution is
\(
\dis   \cc_-^\star \gla  \ll^{-1}  \; \frac{1-e^{-\ee}}{ 1 + \; e^{-\ee} \; \frac{\pi_F}{1-\pi_F}}  \ .
\)
For \begin{math}
   \cc_-^{\star\star} := \min\{ \cc_-^\star, \; 1\}
\end{math} and \begin{math}
   j_-^\star \dea (1-\cc_-^{\star\star}) \; \pi_F \; m 
\end{math}
and all \begin{math}
   j \ge  j_-^\star
\end{math} holds: \begin{math}
   Q_-(j) \le e^\ee
\end{math} .
This gives 
\begin{displaymath}
  \dd_-(\ee) \gla
    \sum_{j =0}^{ \lf j_-^\star \rf}
    \Prob{Y_{F,\SAMP} = j/m  }{ \mu_-}  \ \left( 1 - \frac{e^\ee}{Q_-(j)} \right)  \ . 
\end{displaymath}}{}
Summarizing we have shown
\begin{theorem}
For databases of size $n$ and property queries $F$ 
with probability $\pi_F$ for an entry being positive,
subsampling with rate $\ll$   gives the privacy curve $\dd(\ee) = \max \{
\dd_+(\ee), \; \dd_-(\ee) \}$ where
\SmallMath{
\begin{eqnarray*}
      \dd_+(\ee) &\gla& \sum_{j = \left\lceil j_+^\star \right\rceil}^{n \; \ll}  \pi_F^{j-1} (1-\pi_F)^{n \; \ll-j-1} \binom{n \; \ll-1}{j} \\
      &&  \qq  \lk \ll \lk \frac{j (1-\pi_F)}{n \; \ll-j} - e^\ee \pi_F \rk + (1-\ll) (1-e^\ee) \frac{n \; \ll \; \pi_F (1-\pi_F) }{n \; \ll-j} \rk \ , \\[2ex]
      \dd_-(\ee) &\gla& \sum_{j = 0}^{\left\lceil j_-^\star \right\rceil}  \pi_F^{j-1} (1-\pi_F)^{n \; \ll-j-1} \binom{n \; \ll-1}{j} \\
      &&  \qq \lk \ll \lk \pi_F - e^\ee \frac{j (1-\pi_F)}{n \; \ll-j}  \rk + (1-\ll) (1-e^\ee) \frac{n \; \ll \; \pi_F (1-\pi_F) }{n \; \ll-j} \rk   \ , \\
  j_+^\star &\gla&  \min \{ \ll , \; \ll + \frac{e^\ee-1}{ 1 + \; e^\ee \; \frac{\pi_F}{1-\pi_F}} \}  \; n \; \pi_F, \
    j_-^\star \gla \max \{ 0, \; \ll +\frac{e^{-\ee}-1}{ 1 + \; e^{-\ee} \; \frac{\pi_F}{1-\pi_F}} \} \; n \; \pi_F \ . 
\end{eqnarray*}}
\end{theorem}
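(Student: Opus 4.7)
The plan is to assemble the pieces already derived in this section and feed them into the privacy-curve identity~(\ref{PrivCurveProp}). First, I would invoke Lemma~\ref{LemSubsam} to fix the closed-form expressions for the marginals $\Prob{Y_{F,\SAMP}=j/m}{\mu_+}$ and $\Prob{Y_{F,\SAMP}=j/m}{\mu_-}$, and form the ratio $Q_+(j)$. The algebraic reduction performed above substitutes $j = (1+\cc)\pi_F m$ and simplifies $Q_+((1+\cc)\pi_F m)$ to $(1+\ll \cc)/(1-\ll \cc\,\pi_F/(1-\pi_F))$; equating this to $e^\ee$ and solving in $\cc$ gives $\cc_+^\star = \ll^{-1}(e^\ee-1)/(1+e^\ee\,\pi_F/(1-\pi_F))$, while clipping with $1/\pi_F - 1$ enforces the support constraint $j \le m$. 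Analogously, substituting $j = (1-\cc)\pi_F m$ into $Q_-(j) = 1/Q_+(j)$ yields $\cc_-^\star = \ll^{-1}(1-e^{-\ee})/(1+e^{-\ee}\,\pi_F/(1-\pi_F))$, clipped at $1$.

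Next I would verify monotonicity: $Q_+(j)$ is a ratio of two affine functions of $j$ with positive leading coefficients, and a direct sign check of the determinant of the corresponding $2\times 2$ coefficient matrix shows $Q_+$ is strictly increasing (and hence $Q_-$ strictly decreasing). This monotonicity is the essential ingredient that lets me collapse the abstract supremum defining the privacy curve in~(\ref{PrivCurve}) into an explicit tail sum: for $\dd_+$ the integrand $\max(0, 1 - e^{\ee - \caL})$ is supported exactly on the upper tail $\{j \ge j_+^\star\}$, and for $\dd_-$ on the lower tail $\{j \le j_-^\star\}$.

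It then remains to plug the density formulas from Lemma~\ref{LemSubsam} into $\sum_j \Prob{Y_{F,\SAMP}=j/m}{\mu_+} - e^\ee \Prob{Y_{F,\SAMP}=j/m}{\mu_-}$ on the appropriate tail and factor out the common $\binom{m-1}{j}\,\pi_F^{j-1}(1-\pi_F)^{m-j-1}$. Collecting the two remaining terms (the $\ll$-branch, coming from the event that the critical element is drawn, versus the $(1-\ll)$-branch, where it is not) yields the bracketed expression stated for $\dd_+$; the same factoring, with the roles of $\mu_+$ and $\mu_-$ swapped, gives $\dd_-$. Substituting $m = \ll n$ throughout and combining the two cases via~(\ref{PrivCurveProp}) produces the theorem.

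The main obstacle is pure bookkeeping rather than conceptual: keeping the factors $\pi_F^{j-1}$ versus $\pi_F^j$ and the two binomials $\binom{m-1}{j-1}, \binom{m-1}{j}$ consistent when pulling out a single common factor, and ensuring that the boundary clippings $\min\{\cc_+^\star, 1/\pi_F-1\}$ and $\max\{0, \ldots\}$ built into $j_\pm^\star$ correctly degenerate to empty sums when the privacy condition is vacuously satisfied on one side (for very large $\ee$, or extreme $\pi_F$). No new ideas beyond what has already appeared in the section are required.
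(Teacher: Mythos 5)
Your proposal follows essentially the same route as the paper: Lemma~\ref{LemSubsam} for the two marginals and the ratio $Q_+(j)$, the substitution $j=(1+\cc)\pi_F m$ to solve $Q_+=e^\ee$ for $\cc_+^\star$ with the clipping at $1/\pi_F-1$ (and the mirror computation for $Q_-$), monotonicity of $Q_\pm$ to identify the support of the $\max(0,\cdot)$ integrand as a single tail, and then factoring the common term $\binom{m-1}{j}\pi_F^{j-1}(1-\pi_F)^{m-j-1}$ out of $\sum_j\bigl(\Prob{\cdot}{\mu_+}-e^\ee\Prob{\cdot}{\mu_-}\bigr)$ before setting $m=\ll n$. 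The only quibble is that the denominator of $Q_+$ is affine in $j$ with \emph{negative} leading coefficient, but your determinant criterion still gives the correct monotonicity, so the argument is sound.
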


\KURZ{
\subsection{Pure Statistical Privacy}\label{Pure:chapter}

In the pure statistical  privacy setting there is no mechanism applied to confuse an adversary.
The entropy generated by the distribution of databases of size $n$ and property probability $\pi_F$
is $H(\pi_F) \; n$, resp.~$H(\pi_F) \; (n-1)$ if the critical entry is fixed where
 $H$ denotes the Shannon entropy function.
The distribution of the number of positive entries among the remaining $n-1$ entries 
has variance $\pi_F \; (1-\pi_F) \; (n-1)$.

To derive the $(\ee,\dd)$-curve in this case, for the quotient of the two conditional distributions
$\mu_+$ and $\mu_-$ one can use the same formulas as for subsampling by setting $\ll=1$.
This gives  (assuming $\ee \le \ln 2$)

\begin{theorem}
For databases of size $n$ and property queries $F$ 
with probability $\pi_F$ for an entry being positive,  pure statistical privacy is guaranteed with
$\dd(\ee) = \max \{ \dd_+(\ee), \; \dd_-(\ee) \}$ where
\SmallMath{
\begin{eqnarray*}
  \dd_+(\ee) &\gla& \sum_{j = \left\lceil j_+^\star \right\rceil}^{n}  \pi_F^{j-1} (1-\pi_F)^{n-j-1}
         \binom{n-1}{j}  \lk \frac{j (1-\pi_F)}{n-j} - e^\ee \pi_F \rk  \ , \\[2ex]  
  \dd_-(\ee) &\gla& \sum_{j = 0}^{\left\lceil j_-^\star \right\rceil}  \pi_F^{j-1} (1-\pi_F)^{n-j-1} \binom{n-1}{j}  \lk \pi_F - e^\ee \frac{j (1-\pi_F)}{n-j}  \rk \\
    j_+^\star &\gla& \left( 1 + \frac{e^\ee-1}{ 1 + \; e^\ee \;
    \frac{\pi_F}{1-\pi_F}} \right)   \ n \; \pi_F, \qd 
    j_-^\star \gla \left( 1 + \frac{e^{-\ee}-1}{ 1 + \; e^{-\ee} \;
    \frac{\pi_F}{1-\pi_F}}  \right) \  n \; \pi_F  \ . 
\end{eqnarray*}}
\end{theorem}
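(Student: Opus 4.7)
The plan is to obtain both identities by specialising the preceding subsampling theorem to $\ll = 1$, as the paragraph just before the theorem already suggests: ``no mechanism'' is formally equivalent to $\SAMP$ with a sample equal to the whole database, in which case the critical entry is always included in the count. First I would write down the two conditional mass functions explicitly. Under $\muja$ with $\aa \in W_F$, the critical entry is fixed positive and the remaining $n-1$ entries are independently positive with probability $\pi_F$, so $\Prob{Y_F = j/n}{\mu_+} = \binom{n-1}{j-1}\pi_F^{j-1}(1-\pi_F)^{n-j}$; analogously $\Prob{Y_F = j/n}{\mu_-} = \binom{n-1}{j}\pi_F^{j}(1-\pi_F)^{n-j-1}$. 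Using the identity $\binom{n-1}{j-1}/\binom{n-1}{j} = j/(n-j)$, their ratio simplifies to the monotone-increasing function $Q_+(j) = j(1-\pi_F)/((n-j)\pi_F)$, which is exactly the $\ll = 1$ instance of Lemma~\ref{LemSubsam}.

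Next I would solve $Q_+(j) = e^\ee$; by monotonicity of $Q_+$, the critical set in~(\ref{PrivCurveProp}) is the upper interval $\{j \ge j_+^\star\}$, and the solution rearranges to the stated closed form for $j_+^\star$. A short algebraic check confirms that this threshold lies in $[0,n]$ for every $\ee \ge 0$ and every $\pi_F \in (0,1)$, so, unlike in the general subsampling statement, no boundary clipping is needed. Plugging the threshold back into the privacy curve yields
$\delta_+(\ee) = \sum_{j=\lceil j_+^\star \rceil}^{n} \bigl(\Prob{Y_F = j/n}{\mu_+} - e^\ee\,\Prob{Y_F = j/n}{\mu_-}\bigr)$.
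Factoring $\pi_F^{j-1}(1-\pi_F)^{n-j-1}\binom{n-1}{j}$ out of each summand and once more rewriting $\binom{n-1}{j-1}$ via the ratio identity above produces the displayed formula for $\delta_+(\ee)$.

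The derivation of $\delta_-(\ee)$ is entirely symmetric: the reciprocal $Q_-(j) = Q_+(j)^{-1}$ is monotone decreasing in $j$, so one solves $Q_-(j) = e^\ee$ to get the lower threshold $j_-^\star$ and sums over $j \le \lceil j_-^\star \rceil$. Lemma~\ref{lemma:statPrivMaxDelta} combined with~(\ref{PrivCurveProp}) then delivers $\delta(\ee) = \max\{\delta_+(\ee),\delta_-(\ee)\}$. The only real obstacle is the bookkeeping needed to verify that the $(1-\ll)$ contributions in the general subsampling formulas vanish cleanly when $\ll = 1$, that the leading $\ll$ factor becomes $1$, and that the thresholds $j_\pm^\star$ stay inside $[0,n]$ so the sums collapse to the forms shown. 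The hypothesis $\ee \le \ln 2$ supplies a comfortable sufficient condition for the latter, but the closed-form algebra actually gives the range inclusion for all $\ee \ge 0$.
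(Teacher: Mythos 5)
Your proposal is correct and follows the same route as the paper, which derives this theorem precisely by setting $\lambda=1$ in the subsampling formulas so that the $(1-\lambda)$ terms vanish and $Q_+(j)$ reduces to $j(1-\pi_F)/((n-j)\pi_F)$. Your additional observation that the thresholds $j_\pm^\star$ then automatically lie in $[0,n]$ for all $\ee\ge 0$ (so no clipping is needed and the paper's assumption $\ee\le\ln 2$ is not actually required) is a correct and worthwhile refinement of the paper's statement.
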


Note that the thresholds $j_+^\star$ and $j_-^\star$ are a constant fraction away from the expectation
of the binomial distribution. 
Thus, by Chernoff's bound $\dd(\ee)$ decreases exponentially with respect to $n$.
On the other hand, for small $\ee$ the dependency on $\ee$ is exponentially in $\ee^2$.
In other words, by increasing the database by a factor $\bb$ one can decrease $\ee$ by
a factor $\sqrt{\bb}$ and still achieves about the same $\dd$. 
}{}

\subsection{Noise}\label{AddNoise:chapter}

Adding noise to the result of a query is the standard mechanism to enhance privacy.
It has been shown that Laplace noise can
achieve \emph{pure differential privacy}\KURZ{(that means $\delta=0$)}{}~\cite{DMN16}.

\begin{definition}[Additive Noise Mechanism] \label{AddNoise:definition}
Let $F$ be a property query  and \ADN\ a noise function defined by a
a random variable $\caN$  with mean $0$ that is independent of the database distribution $\mu$.
Then \begin{math}
   Y_{F,\ADN} \dea Y_{F} + \caN
\end{math} is an additive noise mechanism for $F$.
   If \begin{math}
      \caN = Lap(\psi)
   \end{math} is Laplace distributed with scaling factor $\psi$ and variance $2 \psi^2$
   this mechanism will be denoted by  $\LAP_\psi$ and if
   \begin{math}
      \caN = N(0,\psi^2)
   \end{math} is normal distributed with standard deviation $\psi$ by $\GAUS_\psi$.
\end{definition}

The strength of the noise only depends on the sensitivity $s$ of the query, the
maximal difference between the results of two databases that differ in exactly
one entry. For property queries it holds $s=1/n$ for databases of size $n$. In
general, $\ee$-differential privacy can be achieved by Laplace noise
$\LAP_\psi$ with $\psi= s/\ee$, that means we need noise of scale $1/\ee \; n$
for property queries.

As shown the entropy of the database distribution already guarantees a certain
privacy level. How can statistical privacy  be improved by adding noise? This
section establishes   bounds similar to the distributional privacy setting. They
are independent of the probability $\pi$ of a property  -- thus noise may be
helpful for very small $\pi$ where not enough entropy is given.

\begin{lemma} \label{AddNoise:LaplaceBound}
Let $F$ be a property query for a  distribution on databases of size $n$,
where each entry is positive independently with some fixed probability $0<\pi<1$.
Then for  $\psi>0$, $(F,\LAP_\psi)$ is $(1/ \psi n, 0)$--statistical private.
\end{lemma}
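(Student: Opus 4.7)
The plan is to bound the ratio of the two conditional densities of $Y_{F,\LAP_\psi}$ pointwise by $e^{1/\psi n}$. By Lemma~\ref{lemma:statPrivMaxDelta} and the simplification in~(\ref{PrivCurveProp}), it suffices to compare the densities under $\mu_{j,+}$ and $\mu_{j,-}$, i.e.~when the $j$-th entry is fixed to a positive versus a negative value. Since the other $n-1$ entries are i.i.d.~positive with probability $\pi$, couple the two conditional scenarios by sharing the same random number $K$ of positives among those $n-1$ entries, where $K\sim \text{Binomial}(n-1,\pi)$.

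Under this coupling, $Y_F = (K+1)/n$ under $\mu_{j,+}$ and $Y_F = K/n$ under $\mu_{j,-}$. Writing $f_{Lap}(x) = \frac{1}{2\psi}e^{-|x|/\psi}$ and $p_k = \binom{n-1}{k}\pi^k(1-\pi)^{n-1-k}$, the density of $Y_{F,\LAP_\psi}$ at $z$ is
\begin{displaymath}
g_{+}(z) \;=\; \sum_{k=0}^{n-1} p_k\, f_{Lap}\!\bigl(z - (k+1)/n\bigr), \qquad g_{-}(z) \;=\; \sum_{k=0}^{n-1} p_k\, f_{Lap}\!\bigl(z - k/n\bigr).
\end{displaymath}

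The key step is the termwise comparison: by the triangle inequality, for every $k$ and every $z$,
\begin{displaymath}
\bigl|\, |z-k/n| - |z-(k+1)/n| \,\bigr| \;\le\; 1/n,
\end{displaymath}
so $f_{Lap}(z-(k+1)/n) \le e^{1/(\psi n)}\, f_{Lap}(z-k/n)$ and symmetrically in the other direction. Summing with the weights $p_k$ (which are identical in both sums), one obtains $g_+(z) \le e^{1/(\psi n)}\, g_-(z)$ and $g_-(z) \le e^{1/(\psi n)}\, g_+(z)$ for all $z$.

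Consequently the privacy loss random variable $\caL^{F,\LAP_\psi}_{\mu_{j,+},\mu_{j,-}}$ is bounded in absolute value by $1/(\psi n)$ everywhere, so in the defining integral~(\ref{PrivCurve}) the integrand vanishes for $\ee = 1/(\psi n)$. This yields $\dd_{F,\LAP_\psi,\mu_{j,+},\mu_{j,-}}(1/\psi n) = 0$, and the same argument with $\alpha,\beta$ swapped handles the other direction, proving $(1/\psi n, 0)$-statistical privacy. The only conceptual point to be careful about is that the coupling of the two scenarios must use the \emph{same} binomial random variable on the remaining $n-1$ entries; once this is set up, the bound is just the standard sensitivity-$1/n$ Laplace argument applied termwise, which is why no entropy of the distribution enters and the bound is independent of $\pi$.
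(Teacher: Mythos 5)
Your proof is correct and follows essentially the same route as the paper: both decompose the output density as a binomial mixture of shifted Laplace densities with identical weights under $\mu_+$ and $\mu_-$ (your ``coupling'' is exactly the paper's reindexing so that corresponding summands share the coefficient $\binom{n-1}{j}\pi^j(1-\pi)^{n-1-j}$), and both conclude via the pointwise bound $\bigl|\,|x-j/n|-|x-(j+1)/n|\,\bigr|\le 1/n$ that the density ratio is at most $e^{1/\psi n}$, giving $\delta=0$. No gaps.
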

\KURZ{\begin{proof}
   For every (measurable) set $S$ holds
   \begin{equation}
      \begin{aligned}
         \Prob{Y_{F,\LAP_\psi} \in S}{\mu_+}
         &= \sum_{j=0}^{n} \Prob{Y_{F,\LAP_\psi} \in S \mena Y_F = \frac{j}{n} }{ \mu_+} \qd \Prob{Y_{F} = \frac{j}{n}}{ \mu_+} \nonumber \\
         &= \sum_{j=1}^{n} \frac{1}{2 \psi} \ \int_S e^{-\frac{\left\vert x - (j/n) \right\vert}{\psi}} \mathbf{d}x \qd \binom{n-1}{j-1} \ \pi_F^{j-1} (1-\pi_F)^{n-j} \nonumber \\
         &= \frac{1}{2 \psi} \ \int_S \ \sum_{j=0}^{n-1} e^{-\frac{\left\vert x - ((j+1)/n) \right\vert}{\psi}} \binom{n-1}{j} \ \pi_F^{j} (1-\pi_F)^{n-j-1} \mathbf{d}x \label{fehlerbestimmung:laplace:1} \\ 
      \end{aligned}
   \end{equation}
   \begin{equation}
      \begin{aligned}
         \Prob{Y_{F,\LAP_\psi} \in S}{\mu_-}
         &= \sum_{j=0}^{n} \Prob{Y_{F,\LAP_\psi} \in S \mena Y_F = \frac{j}{n} }{ \mu_-} \qd \Prob{Y_{F} = \frac{j}{n}}{ \mu_- } \nonumber \\
         &= \frac{1}{2 \psi} \ \int_S \ \sum_{j=0}^{n-1} e^{-\frac{\left\vert x - j/n \right\vert}{\psi}} \binom{n-1}{j} \ \pi_F^{j} (1-\pi_F)^{n-j-1} \mathbf{d}x \label{fehlerbestimmung:laplace:2} \ .
      \end{aligned}
   \end{equation}
   
   Notice that the corresponding summands only differ in the exponential function
   which can be rewritten as 
   \begin{displaymath}
      e^{-\frac{\left\vert x - ((j+1)/n) \right\vert}{\psi}} \gla    e^{-\frac{\left\vert x -  (j/n) \right\vert}{\psi}}
      \qd e^{\frac{1}{\psi}\underbrace{\left(   \left\vert x - j/n \right\vert -
      \left\vert x - (j+1)/n \right\vert \right)}_{\zeta(x, j)}} \ , 
   \end{displaymath}
   where the distinguishing  function $\zeta(x, j)$ is given 
   \begin{displaymath}
      \zeta(x,j) = \begin{cases*}
         -1/n &\text{if $x \leq j/n$}, \\
         +1/n &\text{if  $x \geq (j+1)/n$}, \\
         2 \; (x- \frac{j+1/2}{n} ) &\text{else.}
    \end{cases*}     
   \end{displaymath}
    Since  \begin{math}
      -1/n \le \zeta(x,j) \le1/n
    \end{math} for all $x$ and $j$ we get  
    \begin{displaymath}
         e^{-1/ \psi n}   \    \Prob{Y_{F,\LAP_\psi} \in S}{\mu_-} \kla
         \Prob{Y_{F,\LAP_\psi} \in S}{\mu_+}
         \kla  e^{1/ \psi n}   \    \Prob{Y_{F,\LAP_\psi} \in S}{\mu_-} \ .
   \end{displaymath}
\end{proof}}{}
Note that the bound $e^{1/ \psi n}$ is quite tight.
If  $x \ge 1$ then for all \begin{math}
   0 \leq j \leq n-1
\end{math} holds  \begin{math}
   x \ge (j+1)/n
\end{math}  and thus \begin{math}
   \zeta(x,j) = 1/n
\end{math} .    
Choosing \begin{math}
   S=[1,\infty)
\end{math} gives    
\begin{displaymath}
   \Prob{Y_{F,\LAP_\psi} \in S}{\mu_+}  \gla e^{1/ \psi n}   \    \Prob{Y_{F,\LAP_\psi} \in S}{\mu_-}  \ . 
\end{displaymath} 
 Thus, statistical privacy requires about the same Laplace noise as differential privacy for $\dd=0$.
 \KURZ{But w}{W}hat about \KURZ{$(\ee,\dd)$-privacy with }{}\begin{math}
   \dd>0
 \end{math}?
 In the next section we will show that the uncertainty of an adversary\KURZ{ based on the database distribution}{}
 allows significantly smaller $\dd$ for the same $\ee$.  
 Alternatively, one could add Gaussian noise $\GAUS_{\nu s}$ for arbitrary $\nu$.
\KURZ{In this case  it is not possible to achieve pure differential privacy with $\dd=0$,   
still the normal distribution has other favorable properties. 
In the statistical setting we get the following bound on $\dd$ with resp.~to $\ee$.}{}
\begin{lemma}\label{AddNoise:GaussianBound}
   For databases of size $n$ let $F$ be a property query,
where each entry is positive independently with probability
$0<\pi<1$. Then the query $(F, \GAUS_{\nu s})$ is $(\ee, \delta)$--statistical private
   if $\delta \geq \max \{\dd_{F,M,\mu_+,\mu_-}(\ee),
   \dd_{F,M,\mu_-,\mu_+}(\ee) \}$ where
\SmallMath{
\begin{eqnarray*}
   \dd_{F,M,\mu_+,\mu_-}(\ee) &\approx& \Prob{Y_{F,M} \geq \pi + n \sigma^2 \ee
   +\frac{1/2-\pi}{n}}{\mu_+} - e^\ee \; \Prob{Y_{F,M} \geq \pi + n \sigma^2 \ee
   +\frac{1/2-\pi}{n}}{\mu_-} \\
   \dd_{F,M,\mu_-,\mu_+}(\ee) &\approx& \Prob{Y_{F,M} \leq \pi - n \sigma^2 \ee
   +\frac{1/2-\pi}{n}}{\mu_-} - e^\ee \; \Prob{Y_{F,M} \leq \pi - n \sigma^2 \ee
   +\frac{1/2-\pi}{n}}{\mu_+}
\end{eqnarray*}}
and $\sigma^2 = (\nu s)^2 +  \pi(1-\pi) (n-1)/n^2$.
\end{lemma}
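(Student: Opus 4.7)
The plan is to invoke Lemma~\ref{lemma:statPrivMaxDelta} and derive the two privacy curves $\dd_{F,\GAUS,\mu_+,\mu_-}(\ee)$ and $\dd_{F,\GAUS,\mu_-,\mu_+}(\ee)$ by reducing to the well-known privacy curve of two shifted equal-variance Gaussians. First I would identify the exact laws of $Y_{F,\GAUS_{\nu s}}$: under $\mu_+$ the critical entry is positive, so the output equals $(1+X)/n + \caN$ with $X$ binomially distributed with parameters $n-1$ and $\pi$, and $\caN \sim N(0,(\nu s)^2)$; under $\mu_-$ it equals $X/n + \caN$ for the same $X$. Both are equal-weight binomial mixtures of $n$ Gaussian translates, and the two mixtures are shifted copies of each other, the shift being exactly $s = 1/n$.

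The key approximation step replaces the binomial sum by its normal matching, turning each mixture into a single Gaussian. Matching the first two moments gives
\begin{displaymath}
Y_{F,\GAUS} \,\approx\, N\!\left(\pi + \tfrac{1-\pi}{n},\, \sigma^2\right) \text{ under }\mu_+, \qquad Y_{F,\GAUS} \,\approx\, N\!\left(\pi - \tfrac{\pi}{n},\, \sigma^2\right) \text{ under }\mu_-,
\end{displaymath}
with common variance $\sigma^2 = (\nu s)^2 + \pi(1-\pi)(n-1)/n^2$ obtained by summing the injected Gaussian variance and the intrinsic variance of $X/n$. The two approximating means are separated by exactly $s = 1/n$, and their midpoint is $\pi + (1/2 - \pi)/n$.

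Next I would compute the PLRV for two same-variance Gaussians, which is linear in $z$:
\begin{displaymath}
\caL_{\mu_+,\mu_-}(z) \;=\; \frac{z - \pi - (1/2-\pi)/n}{n \sigma^2}\,.
\end{displaymath}
Solving $\caL \ge \ee$ gives the threshold $T = \pi + (1/2-\pi)/n + n\sigma^2 \ee$. Substituting into equation~(\ref{PrivCurve}) and using the simplification~(\ref{PrivCurveProp}), on the set $\{z \ge T\}$ the integrand $\mu_+(z)\,(1 - e^{\ee - \caL(z)})$ collapses to $\mu_+(z) - e^\ee \mu_-(z)$, which yields the stated expression for $\dd_{F,M,\mu_+,\mu_-}(\ee)$. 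Swapping the roles of $\mu_+$ and $\mu_-$ flips the sign in front of $n\sigma^2 \ee$ and places the tail on the opposite side, giving the second formula; combining both with Lemma~\ref{lemma:statPrivMaxDelta} establishes $(\ee,\delta)$-statistical privacy.

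The main obstacle is justifying the symbol $\approx$: the Gaussian approximation to the binomial mixture incurs an error that one would want to control via a Berry--Esseen-type estimate, or by directly bounding the total variation between the binomial-of-Gaussians convolution and the matched Gaussian, so that the true privacy curve stays close to the matched one. A secondary minor point is that the approximating Gaussian is supported on all of $\Rb$ whereas the exact $Y_F$ lies in $[0,1]$; the excluded tails decay rapidly relative to the $e^\ee$ factor and contribute only negligibly, so they do not affect the leading-order formula stated in the lemma.
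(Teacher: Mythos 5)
Your proposal is correct and follows essentially the same route as the paper's proof: a normal approximation of the binomial part of $Y_F$ under $\mu_+$ and $\mu_-$, convolution with the injected Gaussian to get two equal-variance Gaussians with means separated by $1/n$ and common variance $\sigma^2=(\nu s)^2+\pi(1-\pi)(n-1)/n^2$, and then locating the threshold $x^\star=\pi+n\sigma^2\ee+(1/2-\pi)/n$ where the likelihood ratio equals $e^\ee$ to reduce the privacy curve to the stated tail-probability difference. The paper likewise leaves the quality of the $\approx$ (Berry--Esseen control of the normal approximation) unquantified, so your flagged obstacle is a gap shared with, not introduced relative to, the original argument.
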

\KURZ{
\begin{proof}
   For larger $n$ the binomial distribution can be approximated by the Gaussian distribution.
   Thus, $Y_F \approx \caN(\pi - \pi/n +
   1/n, \pi (1-\pi) (n-1) /n^2)$ with respect to $\mu_+$ and $Y_F \approx
   \caN(\pi - \pi/n ,\pi (1-\pi) (n-1) /n^2)$ with respect to $\mu_-$.
   Since the convolution of these Gaussians with noise $\GAUS_{\nu s}$ results
   in a Gaussian with variance $\sigma^2 = (\nu s)^2 +  \pi(1-\pi)
   (n-1)/n^2$ it holds
   \begin{eqnarray*}
      Y_{F,M} \mid_{\mu_+} &\approx& \caN(\pi (1-1/n) + 1/n,\sigma^2) \\
      Y_{F,M} \mid_{\mu_-} &\approx& \caN(\pi (1-1/n), \sigma^2).
   \end{eqnarray*}
   This gives
   \begin{eqnarray*}
      \dd_{F,M,\mu_+,\mu_-}(\ee) &\approx \int_A \frac{1}{\sqrt{2 \pi \sigma^2}}
      \max \left\{0, e^{-\frac{(x-\pi(1-1/n)-1/n)^2}{2 \sigma^2}} - e^\ee
      e^{-\frac{(x-\pi(1-1/n))^2}{2 \sigma^2}}  \right\} \mathbf{d}x \\
      &= \int_A \frac{e^{-\frac{(x-\pi(1-1/n))^2}{2 \sigma^2}}}{\sqrt{2 \pi \sigma^2}}
      \max \left\{0, e^{-\frac{-2x n + 2 \pi (n - 1)+ 1}{2 \sigma^2 n^2}} - e^\ee \right\} \mathbf{d}x.
   \end{eqnarray*}
   The second argument of the $\max$ function is monotone increasing and zero
   if
   \begin{equation*}
      \ee = -\frac{-2x n + 2 \pi (n - 1)+ 1}{2 \sigma^2 n^2}.
   \end{equation*}
   This has the solution
   \begin{math}
      x^\star = \pi + n \sigma^2 \ee +(1/2-\pi)/n
   \end{math}
   and reduces the integral to
   \begin{eqnarray*}
      \dd_{F,M,\mu_+,\mu_-}(\ee) &\approx& \int_{x^\star}^\infty \frac{e^{-\frac{(x-\pi(1-1/n)-1/n)^2}{2 \sigma^2}}}{\sqrt{2 \pi \sigma^2}}
      \mathbf{d}x - e^\ee \int_{x^\star}^\infty \frac{e^{-\frac{(x-\pi(1-1/n))^2}{2 \sigma^2}}}{\sqrt{2 \pi \sigma^2}}  \mathbf{d}x \\[2ex]
      &=& \Prob{Y_{F,M} \geq x^\star}{\mu_+} - e^\ee \; \Prob{Y_{F,M} \geq x^\star}{\mu_-}.
   \end{eqnarray*}
   Using similar calculations we get for $x^{\star \star} = \pi - n \sigma^2 \ee
      +(1/2-\pi)/n$
   \begin{equation*}
      \dd_{F,M,\mu_-,\mu_+}(\ee) \approx \Prob{Y_{F,M} \leq x^{\star \star}}{\mu_-} - e^\ee \; \Prob{Y_{F,M} \leq x^{\star \star}}{\mu_+}.
   \end{equation*}
\end{proof}}{}

For a property query $\Fpi$ the expectation of $Y_{F,\GAUS_\psi}$ is $\pi$.
Thus to achieve a small $\dd$, the parameters should be chosen such that 
\begin{math}
   \Prob{Y_{F,\GAUS_\psi}  \ge  \ee n \psi^2 + \frac{1}{2n}}{\mu_+} 
\end{math} is small, that means
the threshold \begin{math}
   \ee n \psi^2 + \frac{1}{2n} 
\end{math} should be significantly larger than the expectation  of 
\begin{math}
   Y_{F,\GAUS_\psi}
\end{math}.
This holds if it is larger than \begin{math}
   \pi + k \psi
\end{math}, some multiple $k$ of the standard deviation.
Note that the bound for $\dd$ involves the error function for which no
simple analytical formula is known.

\subsection{Utility Loss} \label{chapter:utility_loss}  

To estimate the tradeoff between privacy amplification and loss of utility for a
query $F$ and a database $D$, one measures the deviation to the correct answer
$y_F(D)$ by the mean squared error (MSE). In the statistical setting we take the
expectation over all databases with respect to the distribution $\mu$.

\begin{definition}[Utility Loss]
For a query $F$  on databases distributed according to a distribution $\mu$ and
a mechanism $M$ the utility loss  is given by \begin{displaymath}
   \UL(F,M) \dea  \E_{\mu} \left[ \left( Y_{F,M} - Y_{F} \right)^2 \right].
\end{displaymath}
\end{definition}
For subsampling  this  can be estimated as follows.
\begin{lemma} \label{error:DLU:subsampling}
For a property query $\Fpi$ subsampling with selection  probability \begin{math}
   \ll = m/n
\end{math} gives
\ \begin{math}
   \UL(F, \SAMP) \gla \pi (1-\pi)  \  (  \frac{1}{m} - \frac{1}{n} )
\end{math}.
\end{lemma}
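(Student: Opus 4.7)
The plan is to apply the tower property by first conditioning on the database $D$ and then averaging over $\mu$. Conditional on $D$, the subsampling mechanism $\SAMP$ draws $m$ entries uniformly without replacement, so the sample proportion $Y_{F,\SAMP}$ of positive entries is an unbiased estimator of $Y_F$: $\E[Y_{F,\SAMP}\mid D] = Y_F$. Consequently
\[
\E\bigl[(Y_{F,\SAMP}-Y_F)^2 \bigm| D\bigr] \;=\; \mathrm{Var}(Y_{F,\SAMP}\mid D),
\]
which reduces the utility loss to the expected conditional variance of a hypergeometric sample proportion.

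Next I would invoke the standard finite-population variance formula for sampling without replacement. Since the $n$ entries of $D$ take values in $\{0,1\}$ with mean $Y_F$, the population variance equals $Y_F(1-Y_F)$, and the variance of the sample mean of size $m$ drawn without replacement equals
\[
\mathrm{Var}(Y_{F,\SAMP}\mid D) \;=\; \frac{Y_F(1-Y_F)}{m}\cdot\frac{n-m}{n-1}.
\]
This is exactly the variance of the hypergeometric proportion appearing already in Lemma~\ref{LemSubsam}, so no new calculation is needed beyond citing it.

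It remains to take the outer expectation over $\mu$. Since each entry is independently positive with probability $\pi$, $nY_F \sim \mathrm{Bin}(n,\pi)$, and a short calculation using $\E[K]=n\pi$ and $\E[K^2]=n\pi(1-\pi)+n^2\pi^2$ for $K\sim\mathrm{Bin}(n,\pi)$ yields
\[
\E_\mu\bigl[Y_F(1-Y_F)\bigr] \;=\; \frac{n-1}{n}\,\pi(1-\pi).
\]
Substituting into the previous display gives
\[
\UL(F,\SAMP) \;=\; \frac{n-1}{n}\,\pi(1-\pi)\cdot\frac{n-m}{m(n-1)} \;=\; \pi(1-\pi)\left(\frac{1}{m}-\frac{1}{n}\right),
\]
which is the claimed identity. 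There is no real obstacle here; the only subtlety is keeping the finite-population correction factor $(n-m)/(n-1)$ in sync with the $(n-1)/n$ coming from the binomial second moment, so that they cancel cleanly to produce the simple closed form $1/m - 1/n$.
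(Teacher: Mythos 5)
Your argument is correct and arrives at the stated formula exactly, but it is organized differently from the paper's proof. Both rely on the same key fact---that conditional on $D$ the subsample proportion is unbiased for $y_F(D)$---yet the decompositions differ. The paper expands $(Y_{F,\SAMP}-Y_F)^2$ around the common mean $\pi$, reads off the two marginal variances $\pi(1-\pi)/m$ and $\pi(1-\pi)/n$ from Lemma~\ref{LemSubsam}, and evaluates the cross term by showing $\COV[Y_F,Y_{F,\SAMP}]={\rm Var}[Y_F]$ via the tower property. You instead condition on $D$ first, identify the conditional MSE with the hypergeometric conditional variance $\frac{Y_F(1-Y_F)}{m}\cdot\frac{n-m}{n-1}$, and then average over $\mu$ using $\E_{\mu}[Y_F(1-Y_F)]=\frac{n-1}{n}\,\pi(1-\pi)$; in effect you use the law of total variance where the paper uses a covariance identity. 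The two routes are equivalent and both exact, but yours has the small advantage of making the finite-population correction explicit and showing that it cancels cleanly against the binomial second-moment factor, so the closed form $1/m-1/n$ is seen not to depend on any binomial approximation of the subsample distribution; the paper's version is shorter once Lemma~\ref{LemSubsam} is in hand, since it needs only the two marginal variances and a single covariance computation.
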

\KURZ{\begin{proof}
   The random variables
   $Y_F$ and $Y_{F,M}$ are binomially distributed with expectation $\pi$ and variance 
   \begin{math}
      \pi \; (1-\pi) / n$, resp.~$\pi \; (1-\pi) / m
   \end{math} (Lemma~\ref{LemSubsam}).
   To shorten the formulas we drop writing down the distribution  $\mu$ when it is clear from the context.
   Thus,
   \SmallMath{
   \begin{displaymath}
      \begin{aligned}
         \E_{\mu} \left[ \left( Y_{F,\SAMP} - Y_{F} \right)^2 \right] &= \E \left[ \left( (Y_{F,\SAMP} - {\pi}) - (Y_{F} - {\pi}) \right)^2 \right]  \\[1ex]
         &= \E \left[ \left( Y_{F,\SAMP} - {\pi} \right)^2 \right] + 
              \E \left[ \left( Y_{F} - {\pi} \right)^2 \right] 
             - 2 \; \E \left[ (Y_{F,\SAMP} - {\pi})  (Y_{F} - {\pi})  \right]   \\[1ex]
         &= \frac{{\pi} (1-{\pi})}{m} + \frac{{\pi} (1-{\pi})}{n} -2 \left( {\pi}^2 - {\pi} \; \E \left[ Y_{F,\SAMP} \right] - {\pi} \; \E \left[ Y_{F} \right] + \E \left[ Y_{F} Y_{F,\SAMP} \right]   \right)   \\[1ex]
         &= \frac{{\pi} (1-{\pi})}{m} + \frac{{\pi} (1-{\pi})}{n}  + 2{\pi}^2 
                -2 \; \E \left[ Y_{F} \; Y_{F,\SAMP} \right] \ . \\
      \end{aligned}
   \end{displaymath}}
   The last term can be evaluated  by computing the covariance since
   \begin{displaymath}
      \E \left[ Y_{F} \;  Y_{F,\SAMP} \right] = \E \left[ Y_{F}\right] \  \E \left[  Y_{F,\SAMP} \right] + \COV \left[ Y_{F}, \;  Y_{F,\SAMP} \right]
   \end{displaymath}
    Now using the fact that taken random samples $X$  of a set $D$ and evaluating $F$ on $X$
    has expectation $y_F(D)$ we get
    \SmallMath{
    \begin{displaymath}
      \begin{aligned}
      &\COV \left[ Y_{F}, \;  Y_{F,\SAMP} \right] \gla \E \left[ (Y_{F} - \E[Y_{F}]) \; (Y_{F,\SAMP} - \E[Y_{F,\SAMP}]) \right] \\[1ex]
      &=\int_{W^n} \left(y_F(D) - \E [Y_{F} ]\right)  \left(
        \sum_{\mathrlap{x \in \IMG(\SAMP(D))}} \ 
            \Pro{x = \SAMP(D)} \;   (y_F(x) - \E [Y_{F,\SAMP} ] ) \right)
            \mathbf{d} \mu (D) \\
      &= \int_{W^n} (y_F(D) - \pi) \; (y_F(D) - \pi) \;
         \mathbf{d} \mu (D) = {\rm Var} [Y_F] \gla \frac{{\pi} (1-{\pi})}{n} \ ,  \text{\normalsize which gives}
      \end{aligned}
    \end{displaymath}} 
    \begin{displaymath}
      \begin{aligned}
         \E \left[ Y_{F} \;  Y_{F,\SAMP} \right] =
         {\pi}^2 + \frac{{\pi} (1-{\pi})}{n} \phantom{=} \text{and} \phantom{=}
         \E \left[ \left( Y_{F,\SAMP} - Y_{F} \right)^2 \right]
         = \frac{{\pi} (1-{\pi})}{m} - \frac{{\pi} (1-{\pi})}{n} \ .
      \end{aligned}
    \end{displaymath}
   \end{proof}}{}
\KURZ{ 
From a statistical point of view evaluating a property $F$ on 
databases $D$ could be interpreted differently.
Namely, we want to estimate the expectation $Y_F$ for the given distribution $\mu$.
This introduces a certain MSE.
Now adding a mechanism $M$ like subsampling increases the MSE.
Then the utility loss generated by $M$ should be defined as the increase of the MSE by $M$ 
compared to the MSE without $M$.

\begin{definition}[Statistical Utility Loss]
For a query $F$  on databases distributed according to a distribution $\mu$ and
a mechanism $M$ the statistical utility loss  is defined as 
$  \ULS(F,M) \dea  {\rm MSE}(Y_{F,M}) -  {\rm MSE}(Y_{F}) \ . $
\end{definition}
This value can easily be computed.

\begin{lemma} \label{error:SLU:subsampling}
Let $\Fpi$ be a property query   and $\SAMP$ be  the subsampling mechanism $\SAMP$ with selection
   probability \begin{math}
      \lambda = m/n
   \end{math}. 
   Then the statistical utility loss equals
   \begin{displaymath}
      \ULS(F, \SAMP) \gla \frac{\pi (1-\pi)}{m} - \frac{\pi (1-\pi)}{n}. 
   \end{displaymath}
\end{lemma}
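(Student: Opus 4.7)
The plan is to interpret the statistical utility loss in terms of variances, which is the natural route given that both $Y_F$ and $Y_{F,\SAMP}$ are unbiased estimators of $\pi$. First I would invoke Lemma~\ref{LemSubsam} to record the two distributional facts needed: $Y_F$ is (up to scaling) binomial with mean $\pi$ and variance $\pi(1-\pi)/n$, and $Y_{F,\SAMP}$ has mean $\pi$ and variance $\pi(1-\pi)/m$. Since both random variables share the common expectation $\pi$, I would identify $\mathrm{MSE}(Y_F)$ and $\mathrm{MSE}(Y_{F,\SAMP})$ with their respective variances.

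Next I would simply subtract: by the definition of $\ULS$,
\begin{equation*}
\ULS(F,\SAMP) = \mathrm{MSE}(Y_{F,\SAMP}) - \mathrm{MSE}(Y_F) = \frac{\pi(1-\pi)}{m} - \frac{\pi(1-\pi)}{n},
\end{equation*}
which yields the claimed formula directly.

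There is no real obstacle here; the statement is essentially a variance computation repackaged under the statistical utility notion. The only subtlety worth making explicit is the unbiasedness step. For $Y_F$ this is immediate from linearity of expectation applied to the indicators $\mathbf{1}[I_j \in W_F]$ under $\mu$ with each entry positive with probability $\pi$. For $Y_{F,\SAMP}$, one notes that conditional on $D$ the subsample is an unbiased estimator of $y_F(D)$ (this was used inside Lemma~\ref{error:DLU:subsampling}), so by the tower rule $\E[Y_{F,\SAMP}] = \E[y_F(D)] = \pi$. Once unbiasedness is in hand, MSE equals variance and the claim is immediate; note in particular that, unlike in Lemma~\ref{error:DLU:subsampling}, no covariance term appears because $\ULS$ compares $\mathrm{MSE}$ values rather than the expectation of a squared difference.
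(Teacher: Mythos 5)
Your proposal is correct and follows essentially the same route as the paper: the paper's proof likewise invokes Lemma~\ref{LemSubsam} for the means and variances of $Y_F$ and $Y_{F,\SAMP}$ and subtracts the two MSEs, which reduce to variances by unbiasedness. Your added remarks on unbiasedness and on the absence of a covariance term only make explicit what the paper leaves implicit.
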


 \begin{proof}
As the subsampling mechanism is independend of $\mu$ Lemma~\ref{LemSubsam} gives
\begin{displaymath}
   \ULS(F, \SAMP) \gla \text{MSE}(Y_{F,M}) - \text{MSE}(Y_{F}) \gla  \frac{\pi(1-\pi )}{m} - \frac{\pi  (1-\pi )}{n}.
\end{displaymath}
\end{proof}

It may look surprising that for subsampling of $\mu$-distributed databases 
for property queries the utility loss is identical to the statistical utility loss.
For additive noise with mean $0$ this is quite easy to see.

\begin{lemma} \label{error:DLU:add_noise}
 Let $\Fpi$ be a property query 
and $\ADN$ an additive noise mechanism defined by the random variable
$\caN_{\ADN}$. Then it holds \begin{math}
   \UL(F, \ADN) \gla  \Va (\caN_{\ADN}) \gla \ULS(F, \ADN)
\end{math}.
\end{lemma}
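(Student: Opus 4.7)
The plan is to unfold each quantity directly from its definition and exploit two facts: (i) by Definition~\ref{AddNoise:definition} we have $Y_{F,\ADN} - Y_F = \caN_{\ADN}$ as random variables, and (ii) $\caN_{\ADN}$ has mean $0$ and is independent of the database distribution $\mu$, hence independent of $Y_F$.

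For the first equality $\UL(F,\ADN) = \Va(\caN_{\ADN})$, I would simply substitute $Y_{F,\ADN} - Y_F = \caN_{\ADN}$ into the definition of $\UL$, yielding $\E_\mu[\caN_{\ADN}^2]$. Since $\E[\caN_{\ADN}] = 0$ this is exactly $\Va(\caN_{\ADN})$.

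For the second equality $\ULS(F,\ADN) = \Va(\caN_{\ADN})$, the plan is to expand the MSE of $Y_{F,\ADN}$ around the true mean $\pi$ and use independence to split the cross term. Concretely, writing $Y_{F,\ADN} - \pi = (Y_F - \pi) + \caN_{\ADN}$ and squaring gives three terms; the cross term $2\,\E[(Y_F - \pi)\caN_{\ADN}]$ vanishes by independence and zero mean of $\caN_{\ADN}$, so $\text{MSE}(Y_{F,\ADN}) = \text{MSE}(Y_F) + \Va(\caN_{\ADN})$. Subtracting $\text{MSE}(Y_F)$ gives the claim.

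There is essentially no obstacle here; the whole argument is a two-line computation once independence and zero mean of the noise are invoked. The only mild subtlety is being explicit that $\caN_{\ADN}$ is independent of $Y_F$ (and not merely of the entries) so that both the direct expectation $\E[(Y_F + \caN_{\ADN} - Y_F)^2]$ and the cross term in the MSE expansion factor cleanly; this follows from the assumption in Definition~\ref{AddNoise:definition} that $\caN_{\ADN}$ is independent of $\mu$.
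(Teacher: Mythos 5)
Your proposal is correct and follows essentially the same route as the paper: the first equality by direct substitution of $Y_{F,\ADN}-Y_F=\caN_{\ADN}$ into the definition of $\UL$, and the second by expanding $\mathrm{MSE}(Y_{F,\ADN})$ around $\pi$ and killing the cross term via independence and zero mean of the noise. No gaps.
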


\begin{proof}
Since  $\caN_{\ADN}$ has mean  $0$ and is independent of  the database distribution  the expected difference is 
\begin{displaymath}
      \E \left[ \left( Y_{F,\ADN} - Y_{F} \right)^2 \right] \gla 
      \E \left[ \left( Y_{F} + \caN_{\ADN} - Y_{F} \right)^2  \right] \gla \E \left[  \caN_{\ADN}^2  \right] \ .
\end{displaymath}
Similarly, since the noise has mean $0$, for  an unbiased estimator $Y_{F,\ADN}$  for $\pi$ 
 we only need to calculate its variance which equals the mean square error.
 \begin{displaymath}
   \begin{aligned}
      \text{MSE}(Y_{F,\ADN}) &\gla \E \left[ \left( Y_{F} - \pi \right)^2 \right]
      +2\;  \E \left[\caN_{\ADN} \left( Y_{F} - \pi \right) \right] + \E \left[ N_{\ADN}^2 \right] \\
      &\gla \text{MSE}(Y_F) + \text{Var}(\caN_{\ADN}) \ . 
   \end{aligned}
 \end{displaymath}

\end{proof}
}{\begin{lemma} \label{error:DLU:add_noise}
   Let $\Fpi$ be a property query 
  and $\ADN$ an additive noise mechanism defined by the random variable
  $\caN_{\ADN}$. Then it holds \begin{math}
     \UL(F, \ADN) \gla  \Va (\caN_{\ADN})
  \end{math}.
  \end{lemma}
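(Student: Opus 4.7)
The plan is essentially a one-line calculation that unpacks the definitions. By Definition~\ref{AddNoise:definition}, the additive noise mechanism satisfies $Y_{F,\ADN} = Y_F + \caN_{\ADN}$, so the difference $Y_{F,\ADN} - Y_F$ collapses to $\caN_{\ADN}$ regardless of the database distribution $\mu$. Substituting into the utility loss definition from Section~\ref{chapter:utility_loss} gives
\begin{displaymath}
   \UL(F,\ADN) = \E_\mu\!\left[(Y_{F,\ADN} - Y_F)^2\right] = \E\!\left[\caN_{\ADN}^{\,2}\right],
\end{displaymath}
where the expectation on the right no longer depends on $\mu$ because $\caN_{\ADN}$ is assumed independent of the database distribution.

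The second and final step is to observe that $\caN_{\ADN}$ has mean zero by assumption, so
\begin{displaymath}
   \E\!\left[\caN_{\ADN}^{\,2}\right] = \Va(\caN_{\ADN}) + (\E[\caN_{\ADN}])^2 = \Va(\caN_{\ADN}),
\end{displaymath}
which yields the claim. There is no real obstacle here: the only subtlety worth noting explicitly is that the mechanism's independence from $\mu$ is what allows one to pull the noise out of the expectation without needing any further structural properties of the property query $\Fpi$ (in particular, the argument never uses the value $\pi$ or the binomial structure of $Y_F$, which is why the same formula applies to both Laplace and Gaussian noise instantiations).
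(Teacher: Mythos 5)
Your proof is correct and follows essentially the same route as the paper's own argument: substitute $Y_{F,\ADN}=Y_F+\caN_{\ADN}$ into the definition of the utility loss, use independence from $\mu$ and the mean-zero assumption, and conclude that the expected squared difference equals the variance of the noise. Nothing is missing for the statement as given.
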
}

In the following let us compare the different privacy mechanisms for a fixed privacy loss.
As shown\KURZ{ above}{}, subsampling generates \begin{math}
   \UL(F, \SAMP) \gla \pi (1-\pi)  \  (  \frac{1}{m} - \frac{1}{n} )
\end{math}
for property queries\KURZ{ $\Fpi$}{}, whereas for additive noise the utility loss is given by the variance.
For Laplace noise this gives\KURZ{ the equation}{} 
\begin{math}
   2 \psi^2 \gla \pi (1-\pi)  \  (  \frac{1}{m} - \frac{1}{n} ) \ .
\end{math}
Since $(F,\LAP_\psi)$ is $(1/ \psi n, 0)$--statistical private (Lemma~\ref{AddNoise:LaplaceBound})
the Laplace mechanism achieves 
\begin{displaymath}
   \ee \gla \sqrt{\frac{2}{\pi (1-\pi)}} \  \sqrt{\frac{\ll}{n-m}} \ . 
\end{displaymath}
Subsampling can achieve the same $\ee$, but with a positive $\dd$.
A formula for this $\dd$ seems to be quite complicated.
\KURZ{For \begin{math}
   \ee < 1.79
\end{math} we could deduce the expression
\begin{displaymath}
   \begin{aligned}
     & \sum_{j = \left\lceil \min\left\{ \frac{m \pi \sqrt{\ll^{-1} 2
      \xi^{-1}}}{\sqrt{n-m} + \sqrt{2 \xi \ll} + 2 \ll / \sqrt{n-m}}, \; m - \pi \right\} \right\rceil}^m \hspace*{-1.5cm}  
      \lk \pi^{j} (1-\pi)^{m-j} \binom{m}{j}  (1 + \ll ( \frac{j}{m} \pi^{-1} - 1) ) \rk \\
     & \qq\qq  \lk \frac{ \ll \left( j + (m-j) \xi + \frac{2 \ll}{1-\pi^2}
      \frac{m-j}{n-m} + \sqrt{\ll \xi} \left( \frac{m-j}{\sqrt{n-m}}
      \frac{\sqrt{2}}{1-\pi} - 2m \right) \right) }{ \ll j - (1- \ll) m \pi } \rk \ ,
   \end{aligned}
\end{displaymath}
for $\dd$, but this does not look helpful.}{}
In the next section we present computations for concrete parameter settings and show
plots of the results.
Now, let us make the same comparison between subsampling and Gaussian noise.

\begin{lemma}
Let $\Fpi$ be a property query 
 and $\SAMP$  a subsampling mechanism with sample size $\ll n$. 
 $\GAUS$ denotes an Gaussian noise mechanism such that  the same utility loss occurs as for $\SAMP$. 
 Then $(F, \GAUS)$ is  $(\ee, \delta)$--statistical private
 if $\delta \geq \max \{\dd_{F,M,\mu_+,\mu_-}(\ee),
   \dd_{F,M,\mu_-,\mu_+}(\ee) \}$ where with $\psi = \pi(1-\pi)(1/\ll -1/n)$
\SmallMath{
\begin{eqnarray*}
   \dd_{F,M,\mu_+,\mu_-}(\ee) &\approx& \Prob{Y_{F,M} \geq \pi + \ee \psi   
   +\frac{1/2-\pi}{n}}{\mu_+} - e^\ee \; \Prob{Y_{F,M} \geq \pi +\ee \psi 
   +\frac{1/2-\pi}{n}}{\mu_-} \\
   \dd_{F,M,\mu_-,\mu_+}(\ee) &\approx& \Prob{Y_{F,M} \leq \pi -\ee \psi 
   +\frac{1/2-\pi}{n}}{\mu_-} - e^\ee \; \Prob{Y_{F,M} \leq \pi - \ee \psi 
   +\frac{1/2-\pi}{n}}{\mu_+}.
\end{eqnarray*}}

\end{lemma}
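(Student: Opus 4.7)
The plan is to reduce the statement directly to Lemma~\ref{AddNoise:GaussianBound} by matching the variance of the added Gaussian noise to the utility loss of subsampling. The key observation is that the two thresholds appearing in the statement are exactly those of Lemma~\ref{AddNoise:GaussianBound}, with the combined variance scalar $n\sigma^2$ simply relabelled as $\psi$. Hence the whole proof reduces to fixing the noise variance correctly and performing a one-line algebraic simplification.

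First I would invoke Lemma~\ref{error:DLU:subsampling} to record that subsampling with rate $\ll = m/n$ has utility loss $\UL(F,\SAMP) = \pi(1-\pi)(1/m - 1/n)$. By Lemma~\ref{error:DLU:add_noise}, the utility loss of the additive Gaussian mechanism equals its noise variance $\Va(\caN_{\GAUS}) = (\nu s)^2$. The hypothesis that both mechanisms produce the same utility loss therefore forces $(\nu s)^2 = \pi(1-\pi)(1/m - 1/n)$.

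Next I would apply Lemma~\ref{AddNoise:GaussianBound} with this choice of noise. Its relevant full variance is $\sigma^2 = (\nu s)^2 + \pi(1-\pi)(n-1)/n^2$, and its privacy-curve thresholds read $\pi \pm n\sigma^2 \ee + (1/2-\pi)/n$. A direct computation then yields
\begin{displaymath}
n\sigma^2 \;=\; \pi(1-\pi)\left[\frac{n}{m} - 1 + \frac{n-1}{n}\right] \;=\; \pi(1-\pi)\left(\frac{1}{\ll} - \frac{1}{n}\right),
\end{displaymath}
which is exactly the $\psi$ defined in the statement. Substituting $\psi$ for $n\sigma^2$ in the thresholds from Lemma~\ref{AddNoise:GaussianBound} reproduces verbatim the two expressions claimed for $\dd_{F,M,\mu_+,\mu_-}(\ee)$ and $\dd_{F,M,\mu_-,\mu_+}(\ee)$.

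The main subtlety is bookkeeping rather than a genuine obstacle: one must distinguish the raw noise variance $(\nu s)^2$, which is what gets matched to the subsampling utility loss, from the convolved variance $\sigma^2$ of $Y_{F,\GAUS}$, which is the quantity that actually enters the privacy thresholds. Once these two variances are kept straight, the result follows by substitution and the short algebraic identity above, with no further estimates required.
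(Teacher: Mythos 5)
Your proposal is correct and follows exactly the same route as the paper, whose proof is simply the one-line remark that the lemma follows from Lemma~\ref{AddNoise:GaussianBound}, Lemma~\ref{error:DLU:subsampling} and Lemma~\ref{error:DLU:add_noise}; you have merely supplied the (correct) algebra showing $n\sigma^2 = \pi(1-\pi)(1/\ll - 1/n) = \psi$ once the noise variance is matched to the subsampling utility loss. No gaps.
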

\KURZ{
\begin{proof}
Follows from Lemma \ref{AddNoise:GaussianBound},  \ref{error:DLU:subsampling} and \ref{error:DLU:add_noise}.
\end{proof}
Again it does not seem possible to describe $\dd$ by a simple explicit expression.}{}

\section{Differential Privacy versus Statistical Privacy}
\label{SectionCompar}

The previous section shows that the relation between the relevant parameters for privacy estimations
often cannot be expressed by simple formulas.
Therefore we have done computations using Mathematica to illustrate these dependencies for the
mechanisms discussed above. In particular, this should provide a practically useful comparison of differential 
and statistical privacy.

There does not seem to be a consensus on the range of $\ee$ needed to provide acceptable
protection in the differential privacy setting. 
Many authors consider $\ee$-values below $1$,
but some real-world data providers use values $\ee>5$ and claim 
still to achieve sufficent  privacy, for example \cite{D21,JD22}. 
For a worst case scenario where the adversary knows almost everything this
definitely cannot be true.
\KURZ{A calculation of  the success probability of an adversary for
different $\ee$-values shows that $\ee>1$ is useless.}{}

Differential privacy assuming that an adversary knows all entries except the critical one 
 is prone to exaggerate the real privacy loss in practical applications.
 However, mathematically it is relatively simple to estimate the privacy parameters when
 Laplace noise is used.
 For statistical privacy the situation is far more complex since one has to compute the
 convolution of the database distribution -- which could be considered as a kind of
 \emph{internal noise} -- and the noise added by the curator -- \emph{external noise} 
 as we seen in the previous section.

To understand the impact of internal noise in the statistical privacy setting, 
we have  compared differential and statistical privacy for property queries 
when additive noise\KURZ{, either Gaussian or Laplace,}{} is added. 
The computations were done for databases of size \begin{math}
   n=1000
\end{math} 
and \KURZ{privacy parameter }{}\begin{math}
   \varepsilon = 0.01
\end{math}.
For property queries the sensitivity equals \begin{math}
   s=1/n
\end{math}. 
In addition, we have evaluated the bounds in case of small databases with \begin{math}
   n=100
\end{math}\KURZ{, too}{}.

For different strength \begin{math}
   \sigma = \nu \; s
\end{math} of external noise the $\dd$ parameters achieved in
both cases are plotted, where $\sigma$ denotes the standard deviation for Gaussian noise
and the scaling factor in case of Laplace noise.
For moderate values \begin{math}
   \nu_1=1
\end{math} and \begin{math}
   \nu_2=3
\end{math} the utility loss seems to be acceptable.

In the statistical setting the results depend on the property probability $\pi$.
If $\pi$ is very close to $0$ or to $1$ the situation is similar to the worst case scenario
in differential privacy. Since property queries (and the entropy function) are symmetric with respect to 
$ \pi=1/2$.
It suffices to restrict $\pi$ to the interval$ [0,1/2]$.
\begin{figure}[ht!]
   \centering
   \begin{minipage}{.45\textwidth}
        \centering
        \includegraphics[width=.95\linewidth]{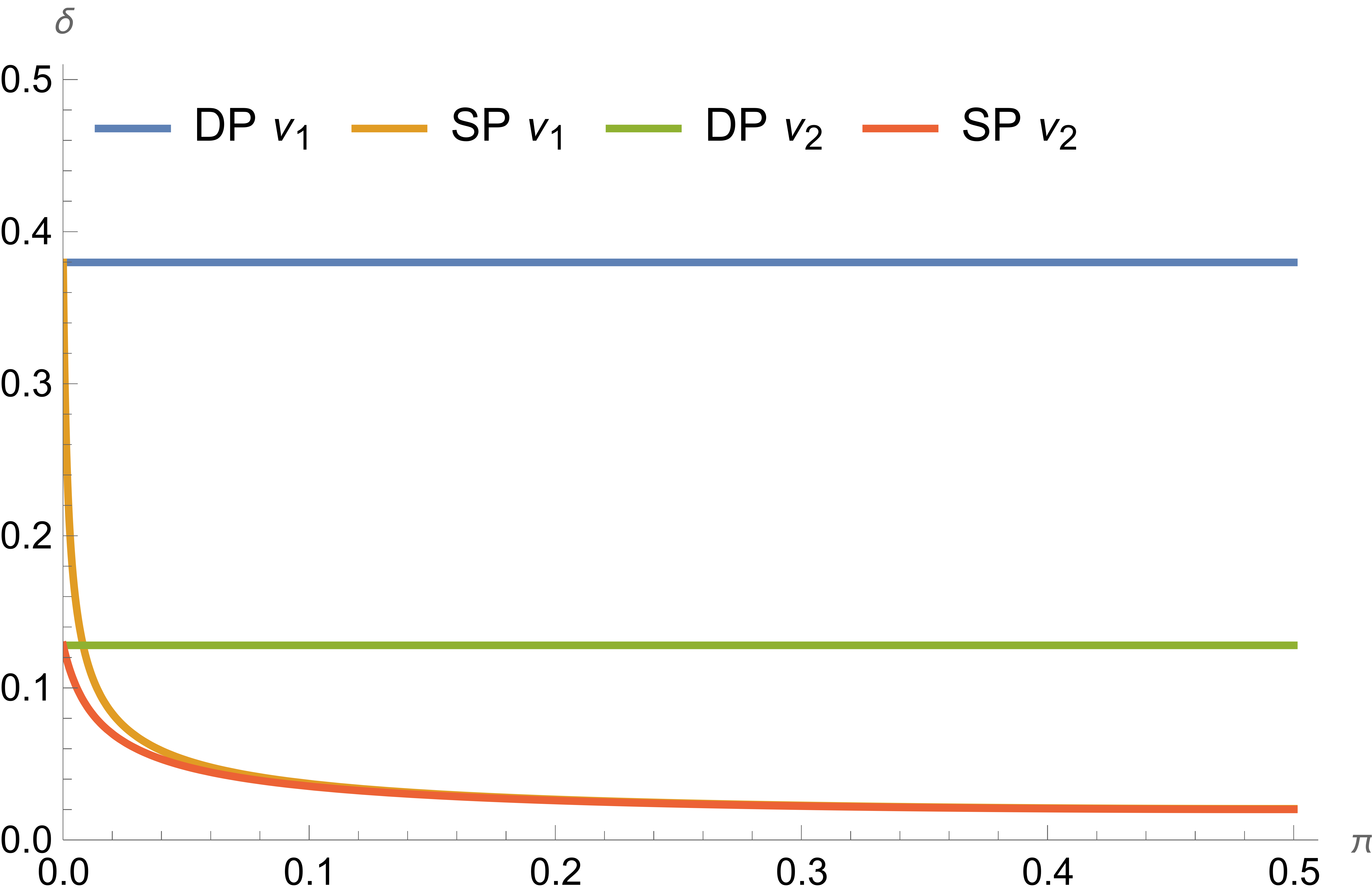}
        \caption{$\delta$ parameter with respect to $\pi$ for noise levels $\nu_1=1$ and  $\nu_2=3$.}
        \label{figure:gaus_dp_vs_sa}
   \end{minipage}
   \hspace*{.05\textwidth}
   \begin{minipage}{.45\textwidth}
        \centering
        \includegraphics[width=.95\linewidth]{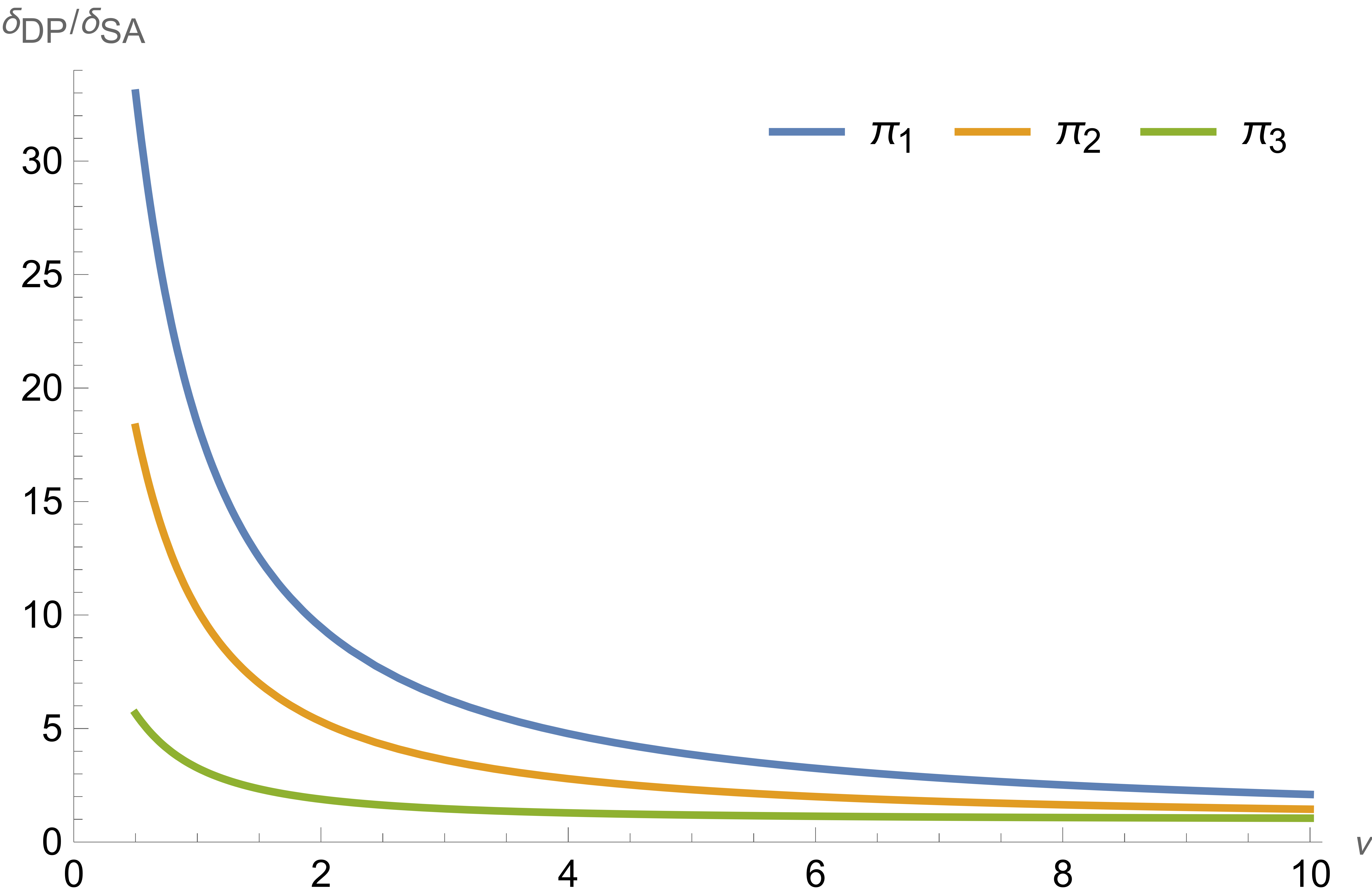}
        \caption{The ratio of the $\delta$ parameters for DP versus SP for different $\pi$-values.}
        \label{figure:gaus_dp:sa}
    \end{minipage}
\end{figure}
For Gaussian noise, see Fig.~\ref{figure:gaus_dp_vs_sa}: with
$\nu_1$,\KURZ{ that means}{} external noise equal to the sensitivity,
differential privacy achieves ${\color{blue!60!black} \delta = 0.38}$
and ${\color{green!60!black} \delta = 0.128}$ for $\nu_2$. 
\KURZ{In the}{For} statistical privacy\KURZ{ setting}{}, $\dd$ is much lower and mainly determined
by the internal noise if the external noise is moderate.
This holds unless $\pi$ gets close to $0$ when the internal noise vanishes.
%
Fig.~\ref{figure:gaus_dp:sa} shows the ratio between the $\dd$-values obtained for
these two privacy settings for different values of $\pi$, namely
${\color{blue!60!black} \pi_1 = 0.5}$,
${\color{yellow!30!orange} \pi_2 = 0.1}$ and ${\color{green!60!black} \pi_3 =0.01}$. 
The $x$-achses gives different strength of external noise 
where the factor $\nu$ ranges from $1$ to $10$.
Even for the maximum $10$ the ratio is significantly larger than $1$.
Note that  $\pi_3$ models a real rare event, where only $1 \%$ of all entries are
expected to be positive, this means $10$ entries out of $1000$.
Adding external noise with $\nu>>1$ degrades the utility of the answer to
almost useless.
\begin{figure}[htb]
    \centering
    \begin{minipage}{.45\textwidth}
         \centering
         \includegraphics[width=.95\linewidth]{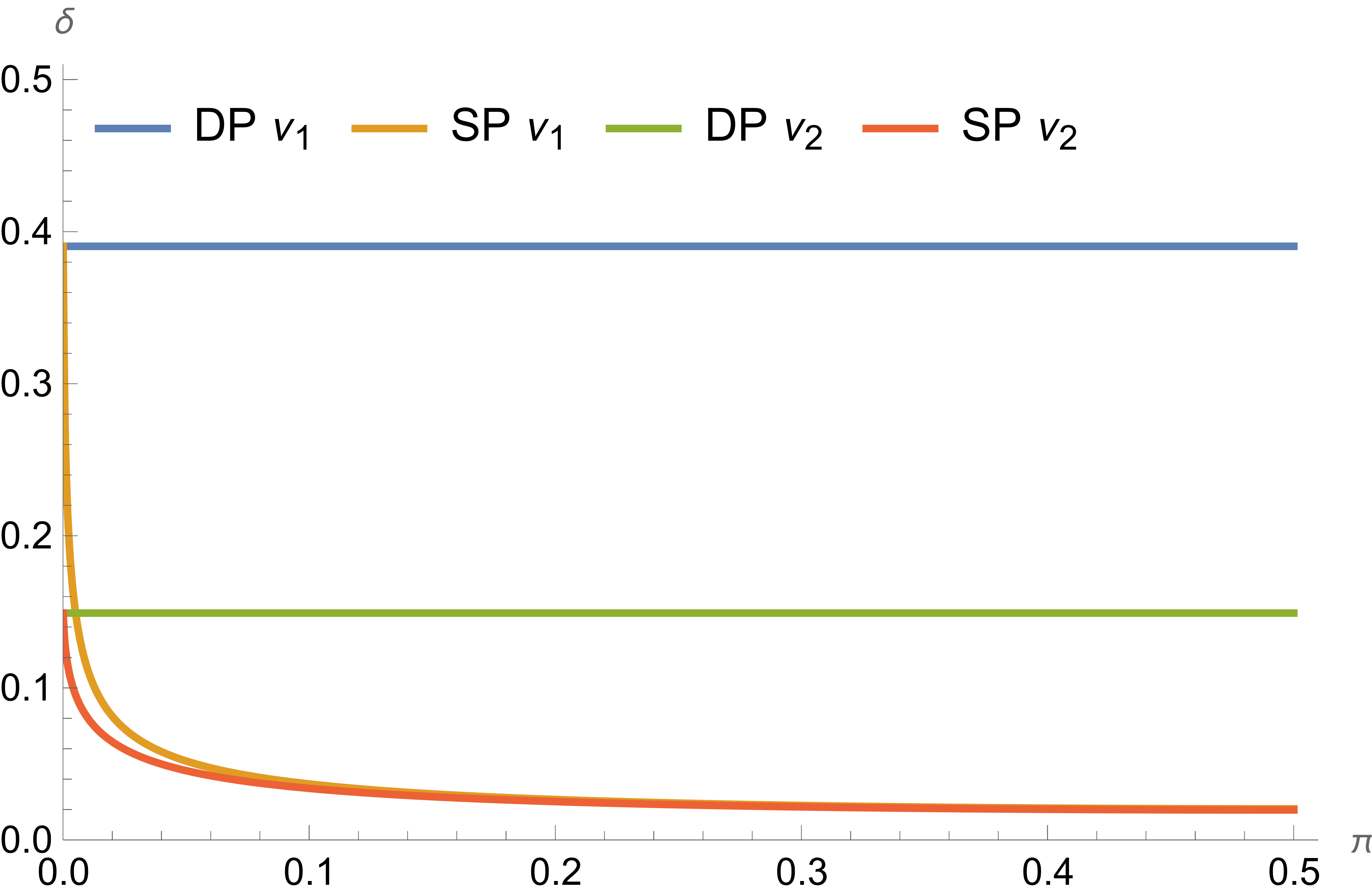}
         \caption{$\delta$ parameter with respect to $\pi$  for noise levels $\nu_1=1$ and  $\nu_2=3$.}
         \label{figure:lap_dp_vs_sa}
    \end{minipage}
    \hspace*{.05\textwidth}
    \begin{minipage}{.45\textwidth}
         \centering
         \includegraphics[width=.95\linewidth]{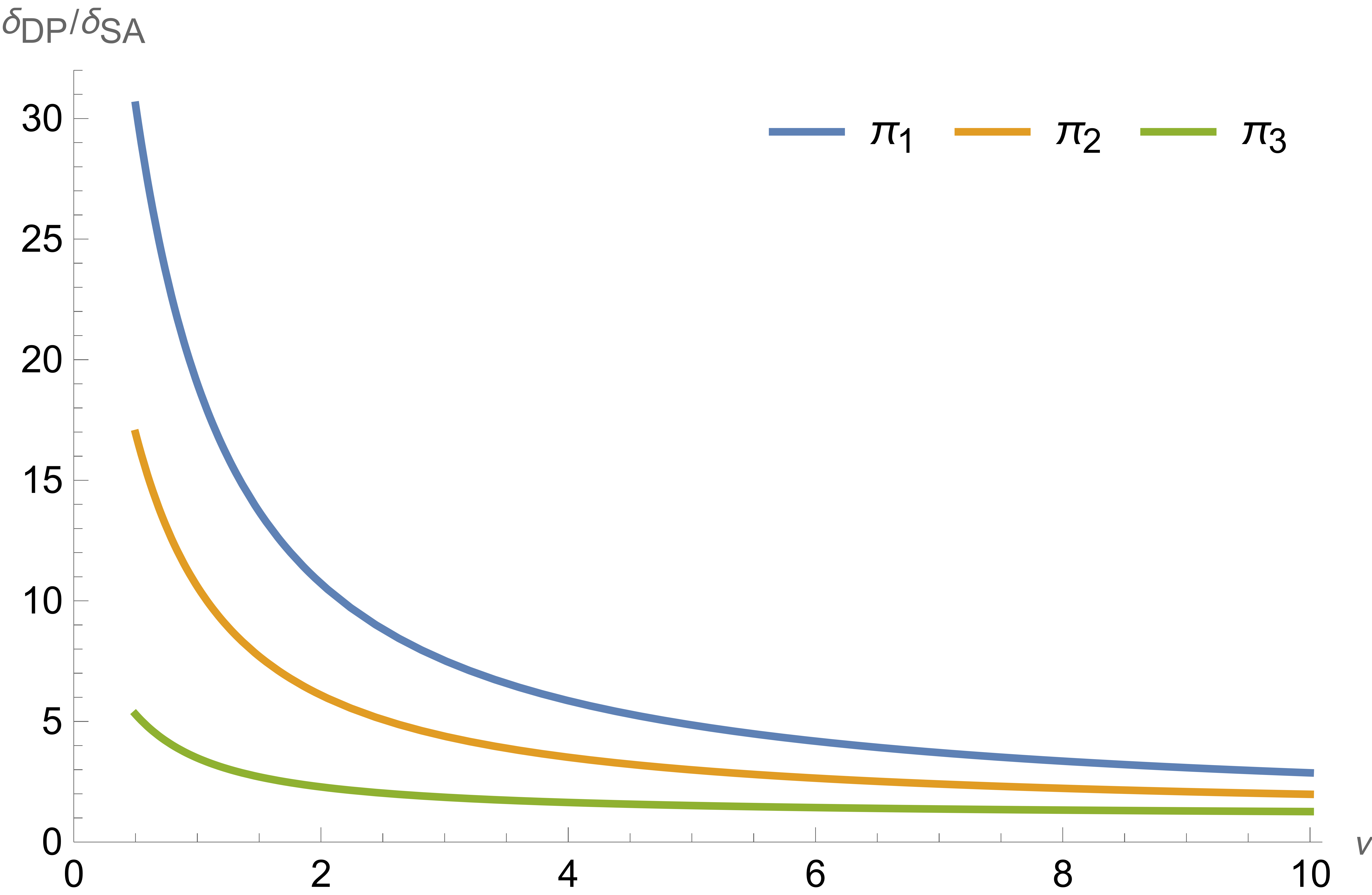}
         \caption{The ratio of the $\delta$ parameters for DP versus SP for different $\pi$-values.}
         \label{figure:lap_dp:sa}
    \end{minipage}
 \end{figure}
\KURZ{For Laplace noise our computations have used the same parameters.
Fig.~\ref{figure:lap_dp_vs_sa} and \ref{figure:lap_dp:sa} show that the results are quite similar.
Laplace noise seems to be slightly better suited in the statistical setting.}{For Laplace noise our computations have used the same parameters.
Fig.~\ref{figure:lap_dp_vs_sa} and \ref{figure:lap_dp:sa} show similar but
slightly better results for Lapalace noise.}
Even for quite small databases of size $n=100$ with much less entropy
the internal noise is quite helpful to guarantee privacy as Fig.~\ref{figure:lapgauss_small} illustrates.
The loss of privacy compared to $n=1000$ is quite small.
These results show that internal noise generated by the database distribution
has a strong effect on the  privacy guarantee obtained even in case of simple property queries.

\section{External Noise versus Subsampling for Statistical Privacy}
\label{SectionSubsam}

For differential privacy  the privacy amplification by subsampling 
has been studied in \cite{BBG20}, for noiseless privacy in \cite{BBG11}.
For differential privacy it has been shown
that subsampling  without replacement with rate
$\ll$ turns a $(\ee, \delta)$-differential privacy mechanism into one with parameters 
\begin{math}   (\klog(1+\lambda (e^\ee -1)) , \;  \lambda \; \delta ) \end{math}. 
Since for small $\ee$ the first expression can be approximated by 
\begin{math}
   \klog(1+\lambda (e^\ee -1)) \approx \ll \; \ee
\end{math} this yields a reduction
of the privacy parameters by the sampling rate.
Small $\ll$, however, yield high utility loss.

Can  the same reduction be achieved in the statistical setting?
Here one has to consider that smaller databases provide less entropy. 
Thus, when the subsampling mechanism generates a database of size $m<<n$
then the base should be the statistical privacy of databases of size $m$.
If this were much less than those for size $n$ then this reduces the effect of
subsampling. But the previous section has shown that the size reduction 
does not have a larger effect.

\KURZ{In case of property queries 
subsampling without replacement generates a hypergeometric distribution that 
for small sample size can be
approximated by a binomial distribution as we have already discussed above.
Since a binomial distribution converges to a normal distribution
the question arises what is the difference between subsampling and
Gaussian noise that generate the same standard deviation of the correct  value.

Furthermore, how does a combination of both behave?
In the limit as the convolution of two normal distributions one can expect 
normal distributed distortion with variance being the sum of both variances.
But for a precise answer again the same problem arises as for binomial distributions.
No exact formula seems to be known for the convolution of a
hypergeometric and a normal distribution.}{}

To generate real data on this issue we have done the following computations.
For a property query $\Fpi$ and  $\varepsilon = 0.01$ the corresponding $\delta$
is computed given a distortion by Laplace or Gaussian noise, resp.~subsampling with $\ll = 0.1$ 
such that the utility loss is identical for all three cases.
The loss is determined by the subsampling rate $\ll$\KURZ{. 
Then the strength of}{ and} the noise is adjusted\KURZ{ accordingly}{} to yield the same loss

\begin{figure}[ht!]
    \centering
    \begin{minipage}{.45\textwidth}
         \centering
         \includegraphics[width=.95\linewidth]{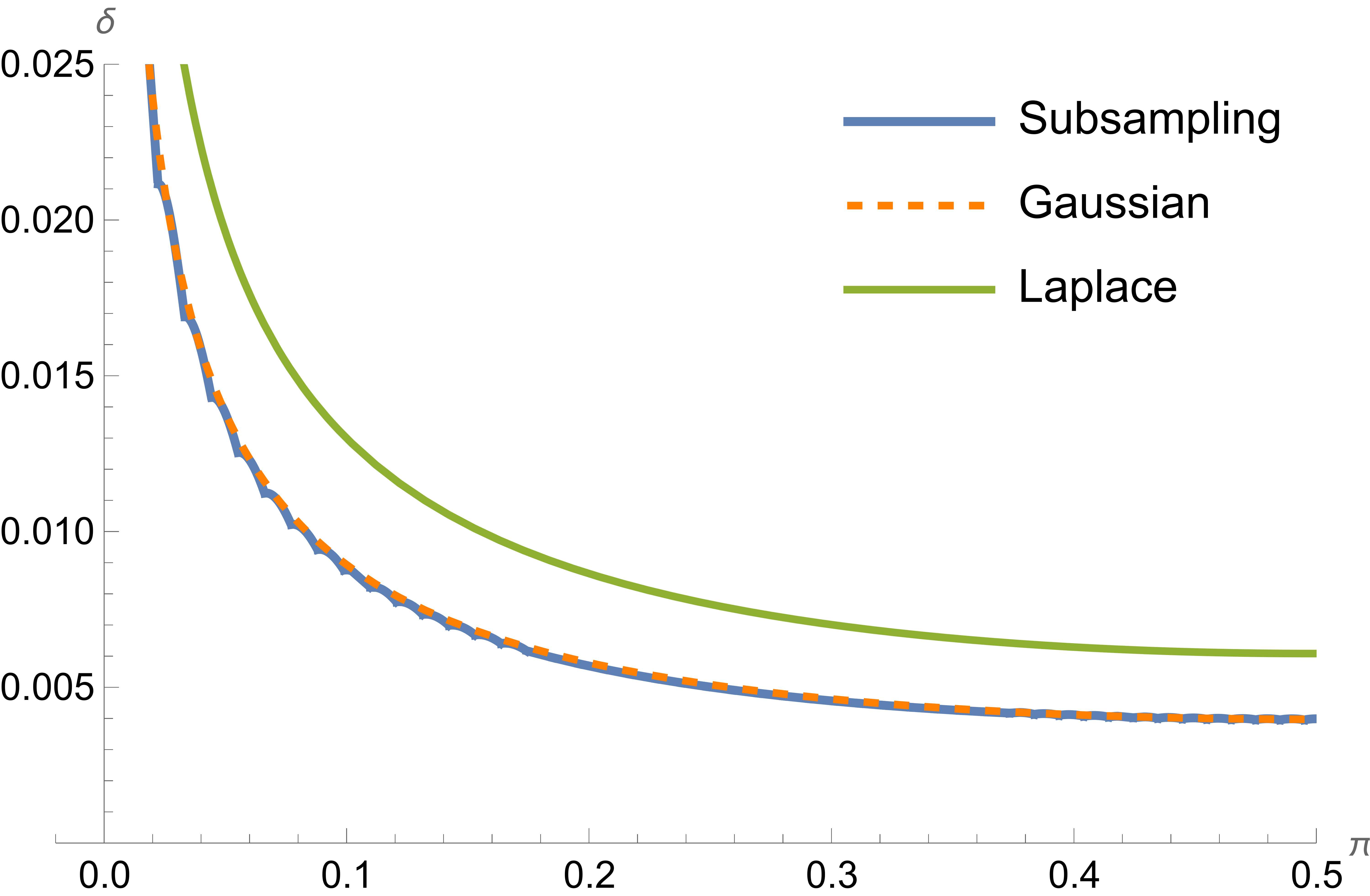} \\
         \caption{Comparison of the three mechanisms with respect to \KURZ{the property probability $\pi$}{$\pi$}}
         \label{figure:comparison_utility_loss_all_delta}
    \end{minipage}
    \hspace*{.05\textwidth}
    \begin{minipage}{.45\textwidth}
         \centering
         \includegraphics[width=.95\linewidth]{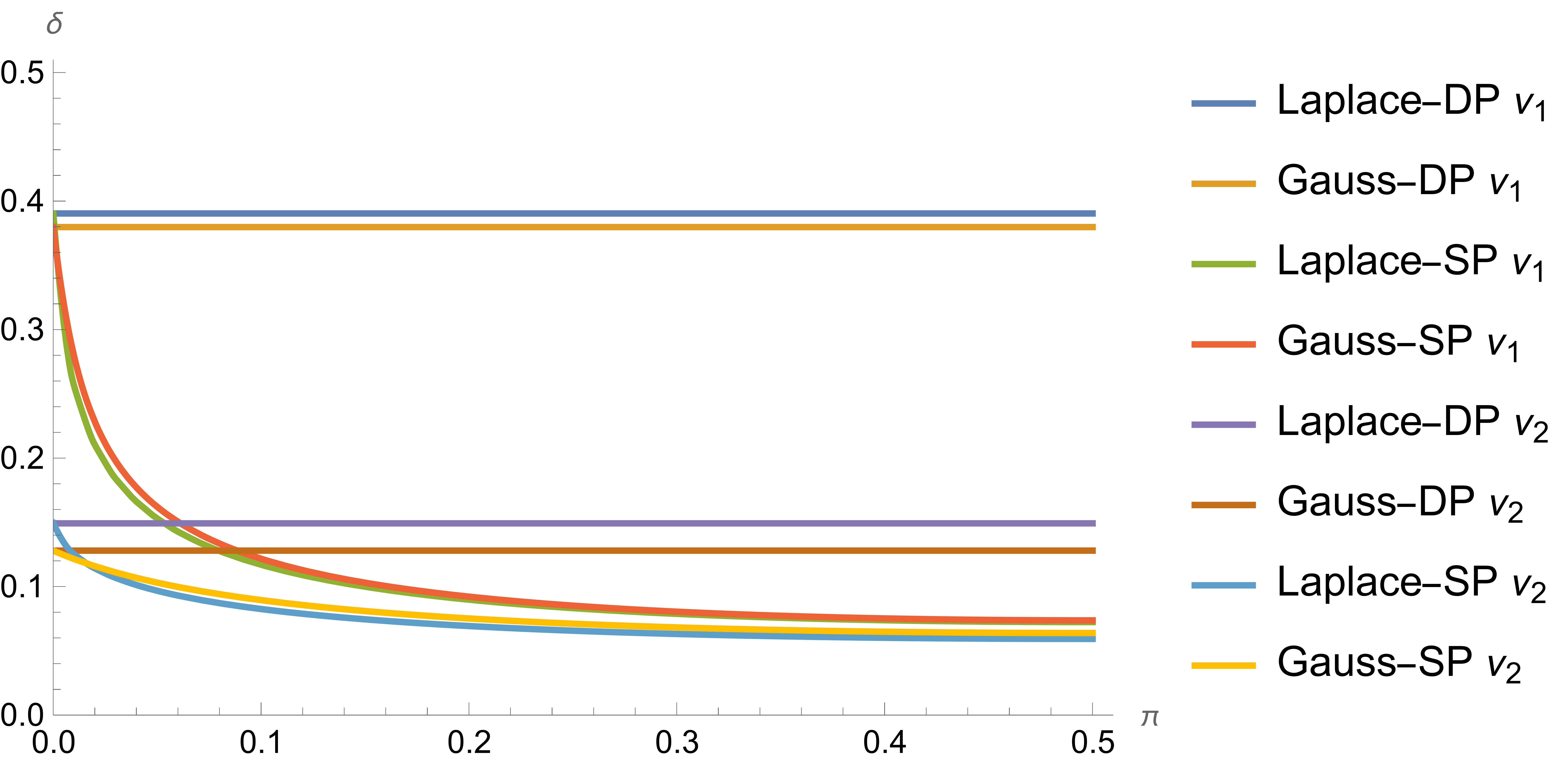}
         \caption{The ratio of the $\delta$ parameters for DP versus SP für $n=100$.}
         \label{figure:lapgauss_small}
    \end{minipage}
 \end{figure}


Fig.~\ref{figure:comparison_utility_loss_all_delta}
gives the results with respect to the property probability $\pi$.
It shows that Gaussian noise and subsampling practically achieve the same $\dd$-values 
even in this bounded scenario while Laplace noise is about  20\% worse.


Fig.~\ref{figure:comparison_variance_free_lambda} plots the dependency on $\ll$
for these three distortion techniques for the $\pi$-values $\pi_1 = 0.5$ and
$ \pi_2 = 0.1$. As above  the parameters are chosen in such a way that the utility loss
is identical in the three cases. Again the curves of subsampling and Gaussian noise fall together
while Laplace noise is always worse. The relative performance of Laplace noise deteriorates with 
smaller $\ll$.

\section{Conclusion}
In this report the privacy risk when accessing a database by  property queries
has been analyzed for adversaries  that know the distribution by which the database is generated.
We have named this scenario  statistical privacy
and  shown that good privacy bounds can already be achieved for moderate size databases.
They depend on the probability of the property\KURZ{ of interest}{}, but are useful unless the
probability gets so small that the expected number of positive entries is close
to 1.
\begin{figure}[ht] 
   \centering
   \begin{minipage}{.45\textwidth}
        \centering
        \includegraphics[width=.95\linewidth]{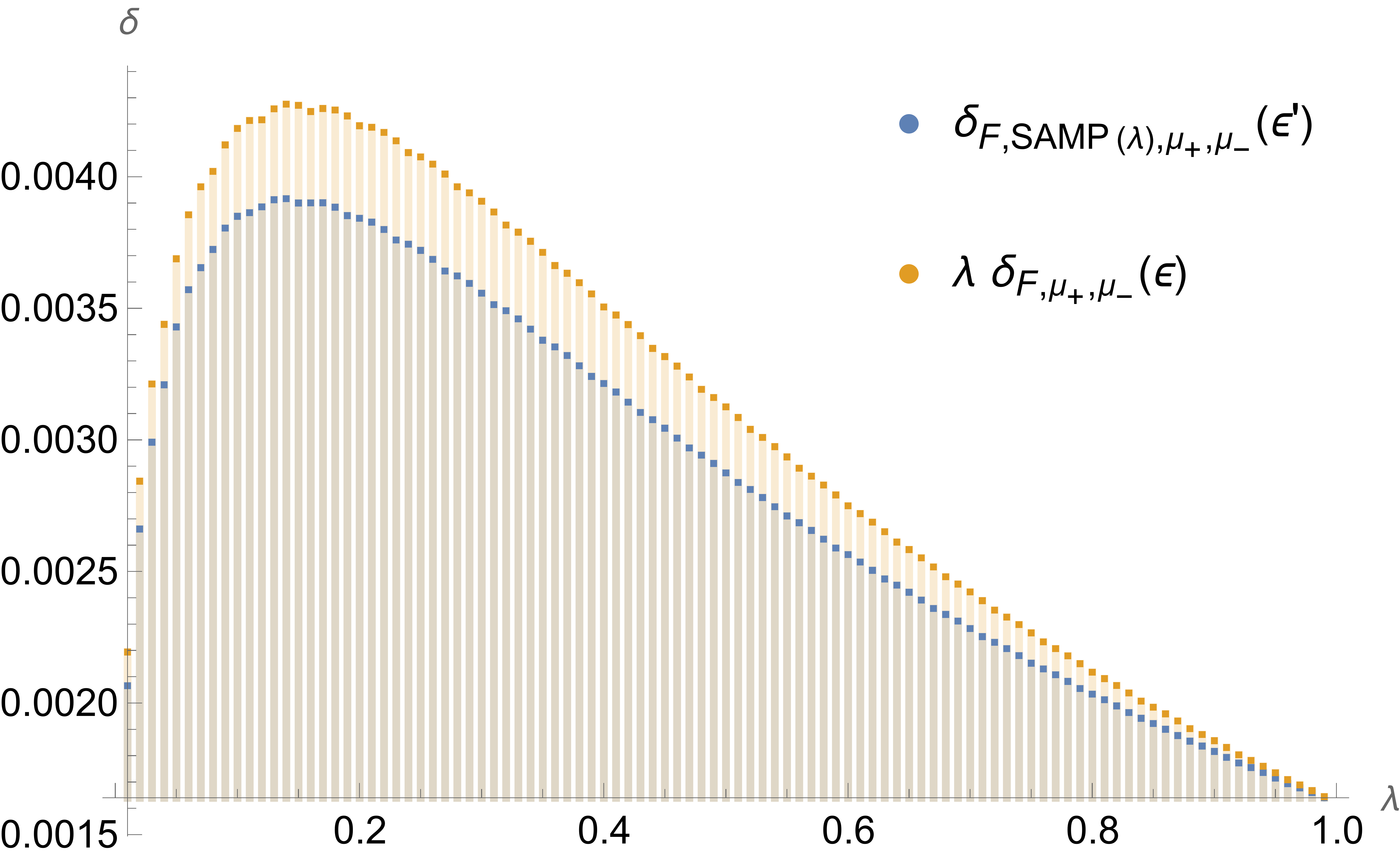}
        \KURZ{\caption{Results for $n = 1000$, $\pi = 0.5$ and $\ee = 0.1$
          the blue curve gives the $\dd$-values for subsampling a database of size $n$
          for $\ee'=  \log(1+\ll (e^\ee -1))$; note that $\dd$ increases  when decreasing $\ll$ 
          up to a certain point because  $\ee'$ gets smaller,
          the yellow curve shows the values for pure statistical privacy for databases of size $m$ multiplied by $\ll$}}{}
        \label{figure:subsampling_comparison}
   \end{minipage}
   \hspace*{.05\textwidth}
   \begin{minipage}{.45\textwidth}
        \centering
        \includegraphics[width=.95\linewidth]{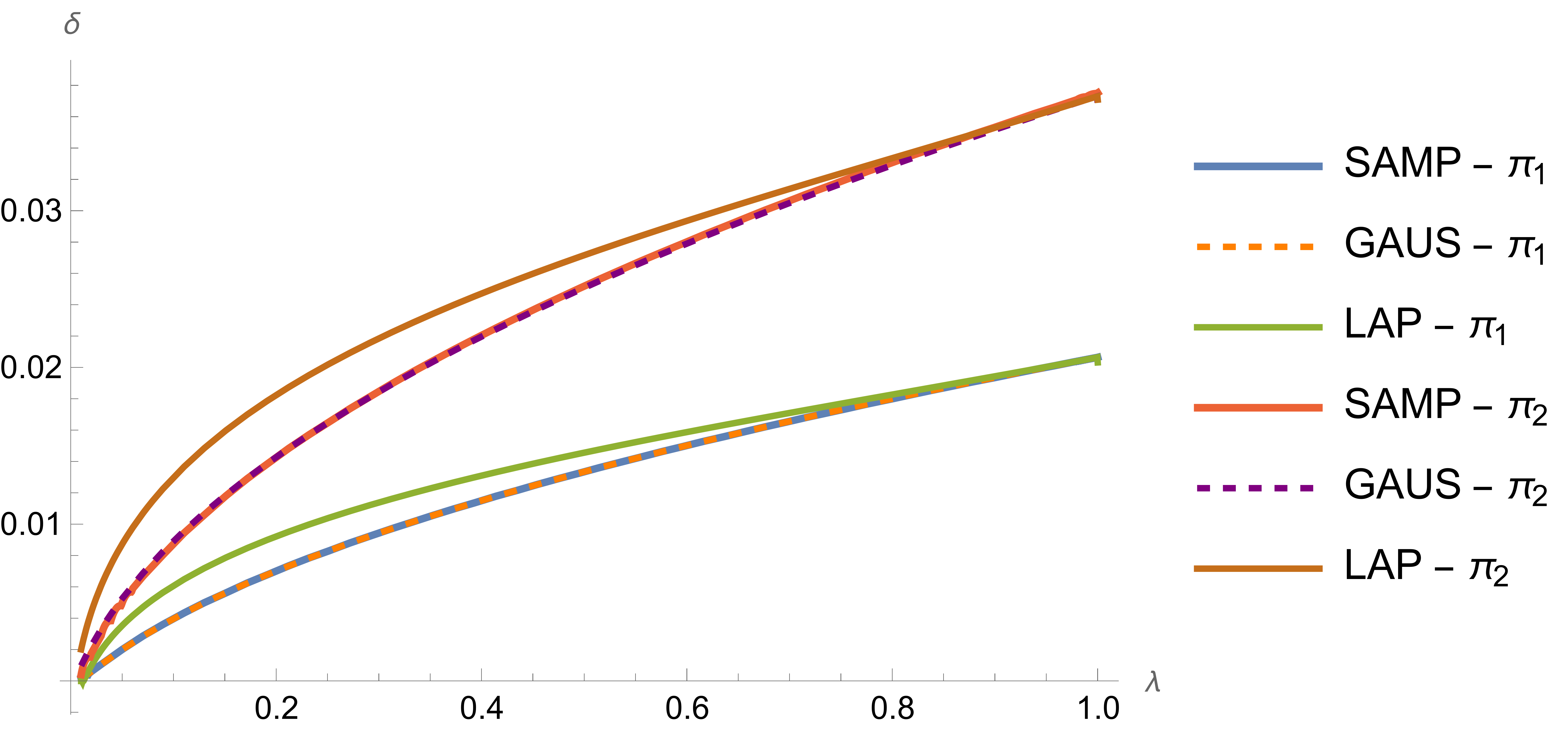}
        \KURZ{\caption{Comparison with respect to the subsampling rate $\ll$ for $\pi_1 = 0.5$ and $ \pi_2 = 0.1$}}{}
        \label{figure:comparison_variance_free_lambda}
   \end{minipage}
   \KURZ{}{\caption{Left: Results for $n = 1000$, $\pi = 0.5$ and $\ee = 0.1$
   the blue curve gives the $\dd$-values for subsampling a database of size $n$
   for $\ee'=  \log(1+\ll (e^\ee -1))$; note that $\dd$ increases  when decreasing $\ll$ 
   up to a certain point because  $\ee'$ gets smaller,
   the yellow curve shows the values for pure statistical privacy for
   databases of size $m$ multiplied by $\ll$. Right: Comparison with respect to the subsampling rate $\ll$ for $\pi_1 = 0.5$ and $ \pi_2 = 0.1$.}}
\end{figure}

For analyzing the addition of noise one has to determine the convolution
of the noise distribution with the distribution of the query under the database distribution.
This is an open mathematical problem for most pairs of distributions.


For distributional privacy there does not exist a general agreement for the choice of $\ee$.
Some real applications  have used values on the order of 10 or even larger.
This\KURZ{ does not make sense and}{} would yield a huge privacy leakage in this
worst case scenario. When this is\KURZ{ still}{} applied to large databases
it can only be justified by the condition that an adversary actually does
not know almost all entries -- thus we are in a statistical privacy setting.
The worst case scenario is not appropriate for many applications.



Privacy amplification by subsampling has been well studied 
for differential privacy.
In the statistical setting one has to analyze a mixture distribution
which can be quite complex in general.
Here we have restricted the query to a binomial distribution.
%
%
\KURZ{As already discussed s}{S}ubsampling with rate $\ll$ turns a $(\ee,
\delta)$-differential privacy mechanism into one where the parameters are of the form
\begin{math}\left(\log(1+\lambda (e^\ee -1)) ,\lambda \; \delta \right) \end{math},
which approximately for small $\ee$ can be simplified to $(\ll \; \ee, \:  \ll \; \dd)$.
And this bound is tight.

Does a comparable result hold for statistical privacy?
This is not obvious since  database of size $m<n$ have less entropy.
For property queries one has to compare 
${ \color{blue!60!black} \dd_{F, \SAMP(\ll),  \mu_+,\mu_-}(\ee')}$ 
for databases of size $n$ with the pure statistical privacy
${\color{yellow!50!orange}  \dd_{F,\mu_+,\mu_-}(\ee)}$ scaled by the rate $\ll$.
Fig.~\ref{figure:subsampling_comparison} shows the result of a specific test.
Here, subsampling gives even better results.
We conjecture that for property queries subsampling never leads to larger 
$\dd$-values and that this even holds more general, formally:



\KURZ{
\emph{
Let $\nu$ be a probability distribution of entries and 
$\nu^n, \; \nu^m$ be the product distribution of $\nu$ for databases of size $n$, resp.~$m$.
If $F$ is a symmetric query 
that is  $(\ee, \delta)$--pure statistical private 
for databases of size $m$ with distribution $\nu^m$
then   $(F, M_{\SAMP})$ with selection property $\ll = n/m$ is 
$(\log(1+\lambda (e^\ee -1)) ,\lambda  \; \dd )$--statistical private 
for databases of size $n$ with distribution $\nu^n$.
}
}{}

Further questions include other subsampling types, for example with replacement.
%
 We have shown  that for additive noise the utility loss is  equal to
the variance of the noise in the statistical setting.
Given a maximal acceptable utility loss Laplace noise seems to perform worse than
Gaussian noise in general.
This should be examined in more detail.

\newpage
%
%
%
%
%
\bibliographystyle{splncs04}
\bibliography{subsam}
\end{document}